\newcommand{\txtemail}[1]{\textbf{\texttt{#1}}}
\newcommand{\mb}[1]{\mathbf{#1}}
\newcommand{\bma}[1]{\boldsymbol{#1}}
\newcommand{\op}[1]{\operatorname{#1}}
\theoremstyle{newstyle}
\newtheorem{theorem}{Theorem}
\newtheorem{lemma}{Lemma}
\newtheorem{definition}{Definition}
\begin{document}

\sloppy

\title{Low-complexity Decoding is Asymptotically Optimal in the SIMO MAC }

\author{
  \IEEEauthorblockN{Mainak Chowdhury and Andrea Goldsmith}\\

 \thanks{ The authors are with the Department of Electrical Engineering,
	 Stanford University, Stanford, CA - 94305. Questions or comments can
	 be addressed to \txtemail{\{mainakch,andreag\}@stanford.edu}.
	 Parts of this work were presented at ISIT, 2013 and Allerton, 2013. This work is supported by the 3Com Corporation Stanford
	 Graduate Fellowship, the NSF Center for
Science of Information (CSoI): NSF-CCF-0939370 and by a gift from Cablelabs. }}

\maketitle


\begin{abstract}

  A single input multiple output (SIMO) multiple access channel, with
  a large number of transmitters sending symbols from a constellation
  to the receiver of a multi-antenna base station, is considered. The
  fundamental limits of joint decoding of the signals from all the
  users using a low complexity convex relaxation of the maximum
  likelihood decoder (ML, constellation search) is investigated.  It has
  been shown that in a rich scattering environment, and in the
  asymptotic limit of a large number of transmitters, reliable
  communication is possible even without employing coding at the
  transmitters.  This holds even when the number of receiver antennas
  per transmitter is arbitrarily small, with scaling behaviour
  arbitrarily close to what is achievable with coding. Thus, the
  diversity of a large system not only makes the scaling law for coded
  systems similar to that of uncoded systems, but, as we show, also allows
  efficient decoders to realize close to the optimal performance of maximum-likelihood decoding. However, while there is no performance loss relative to the scaling laws of the optimal decoder, our 
  proposed low-complexity decoder exhibits a loss of the
  exponential or near-exponential rates of decay of error probability
  relative to the optimal ML decoder.
\end{abstract}

\begin{IEEEkeywords}
Spatial diversity, Multiuser detection, Convex programming
\end{IEEEkeywords}
\section{Introduction}

Although the capacity-achieving techniques of superposition coding at the
encoder and joint decoding at the decoder \cite{cover1991elements} promise
significantly higher capacity for multiuser networks, such sophisticated coding
schemes often suffer from practical challenges. Thus the simpler 
orthogonal schemes to separate users either in time (TDMA), space (sectorization in cellular networks) or frequency (FDMA) have remained in widespread use.  Moreover, the capacity benefits of the optimal scheme over orthogonalizing schemes like time-division have been shown to be
negligible in some regimes such as under asymptotically  low-power or for asymptotically many users (\cite{caire2004suboptimality},\cite{tse2005fundamentals}).

In this work we consider a multiple access setting where an
asymptotically large number of transmitting users communicate in a
rich scattering environment with a single multi-antenna base
station. We look at transmitting schemes which do not employ coding,
but instead transmit symbols from the BPSK constellation and rely on 
the diversity inherent in a large system to achieve reliability. Such
a setting may model sensor networks or general distributed networks
with energy/processing power limitations at the transmitters (which
may preclude sophisticated coding schemes) and centralized
receivers. A similar setting was considered in the companion paper
\cite{ chowdhury2014uncodedfundamental}, where it was shown that using the optimal
maximum likelihood (ML) decoder, the decoding can be made arbitrarily
reliable for an arbitrarily small number of receiver antennas per
transmitter, provided that the number of transmitters is large enough.
A similar setup was also considered in \cite{chowdhury2013reliable}
where a relaxation of the maximum likelihood decoder was shown to be
asymptotically reliable (i.e. the probability of error vanishing to
zero) in the number of transmitters, provided that the number of
 receiver antennas is more than the number  of transmitters.

In this work, we analyze the same low-complexity decoder proposed in
\cite{chowdhury2013reliable} and modify it to handle underdetermined
systems, i.e. systems where the number of receiver antennas is less than the
number of transmitter antennas.  In particular we consider the decoder obtained
by expanding the search over possible symbols to intervals instead of discrete
points and then quantizing the output of the interval search to the nearest
constellation point. This relaxation of the search over integer points to
search over intervals allows more efficient (polynomial time) decoding, but
may not be unique in the regime of underdetermined systems (because of the non-trivial null space for a wide channel matrix). Thus the above procedure yields
a non-singleton solution set in general. We propose a family of randomization
techniques and show that they can  return  provably good estimates from this
solution set. Henceforth we will refer to this decoding technique as the
randomized interval search and quantize (r-ISQ) decoder.  We obtain analytical
bounds on the performance of the r-ISQ decoder and  show that reliable decoding
(in a sense made precise in the later sections) is possible in the asymptotic
limit of a large number of transmitters and receivers, with the per-transmitter
number of receiver antennas being held constant at \emph{any} arbitrary positive
value.  Using the same techniques used in the proof however, the
per-transmitter number of receive antennas can be shown to be arbitrarily close
to the theoretically optimal scaling derived e.g. in
\cite{mainakch2012uncoded},\cite{chowdhury2014uncodedfundamental}.

The rest of the paper is organized as follows. We first present the system
model and describe the optimal decoder and the r-ISQ decoder.  We then describe
a bound on the error probability of this decoder. Asymptotic analyses of these
bounds are then presented.

\section{System Model}

\begin{figure}[h] 

\begin{center}
		\includegraphics[width=\linewidth]{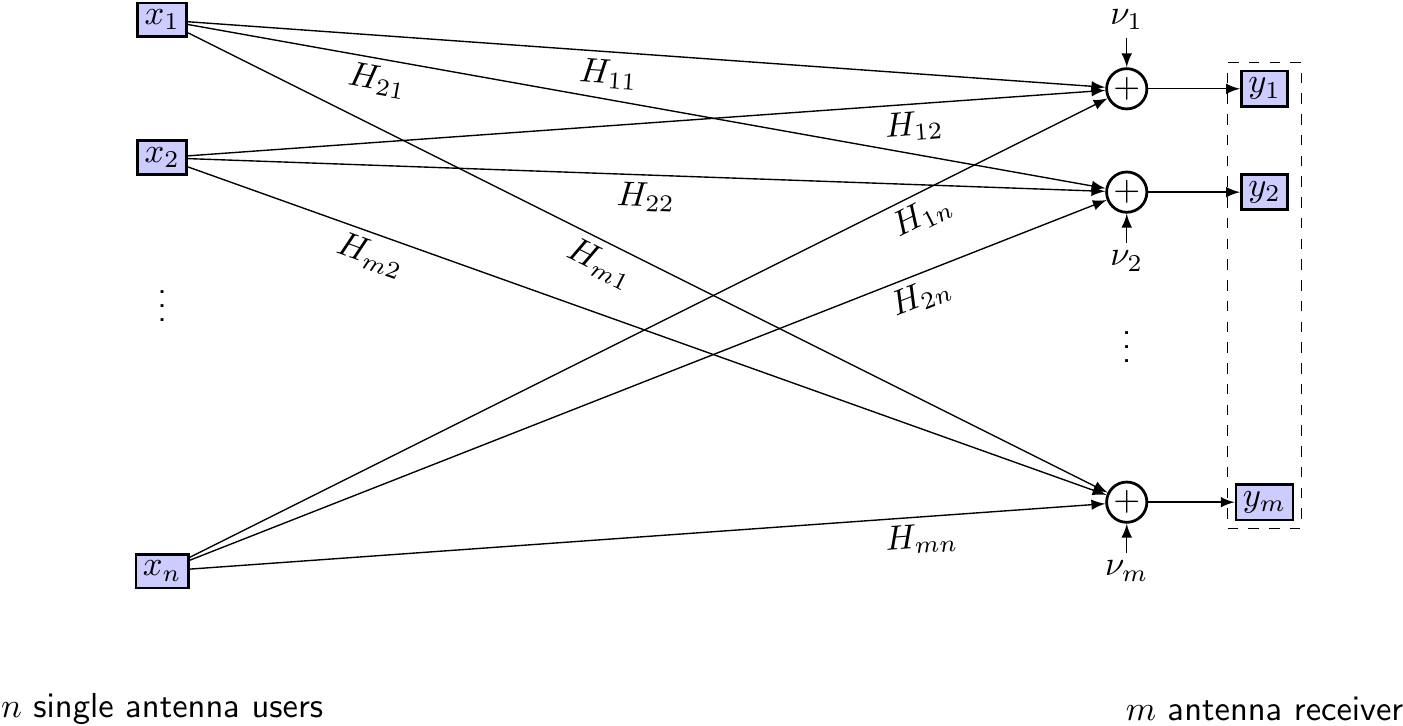}
	\end{center}
	\caption{System model}
	\label{fig:1}
\end{figure}

Our system model is depicted in Figure \ref{fig:1}. 
We have an uplink system with $n$ single-antenna transmitters and an $m$ antenna  receiver. The channel matrix $\mb{H} \in \mathbb{R}^{m \times n}$ is chosen to
model a rich scattering environment, and the entries are assumed to be drawn
i.i.d. from $\mathcal{N} (0,1)$. The $k^{th}$ column of $\mb{H}$ is denoted as $\mb{h_k} \in \mathbb{R}^m$. Thus \[\mb{H}= \begin{pmatrix} \mb{h_1} & \mb{h_2} & \ldots & \mb{h_{n-1}} & \mb{h_n}
\end{pmatrix}.\] We also assume that  the users do not cooperate with each
other and that they transmit symbols from the standard unit energy BPSK
constellation. The components of the noise at the receiver (\boldmath $\nu$ \unboldmath) are assumed
to be i.i.d. $\mathcal{N}(0,\sigma^2)$. The received signal at the multi-antenna receiver is then 
\begin{IEEEeqnarray}{l+}
	\mb{y} = \mb{H x} +  \bma{\nu} .
	\label{eq:1}
\end{IEEEeqnarray}
The vector $\mb{x}\in \{-1,+1\}^n$, which consists of the transmitted symbols from
the $n$ users, is referred to as the \emph{n-user codeword} to indicate that the
receiver decodes the block of n-user constellation points simultaneously. We further assume
that the receiver has perfect channel state information (CSI) and that the
transmitters have no CSI.

\section{Previous work and results }

We now describe a few observations and results about the performance limits of
this system. These results, derived in \cite{mainakch2012uncoded}, \cite{chowdhury2014uncodedfundamental}, assume that
the receiver employs ML decoding, i.e. it returns 

\begin{IEEEeqnarray}{l+}
	\mb{\hat{x}} =\operatorname{argmin}_{\mb{x} \in \{-1,+1\}^n} ||\mb{y}-\mb{H x}|| ^2.
	\label{eq:2}
\end{IEEEeqnarray}
With this decoder it has been shown that in the limit of a large number of transmitters the following holds:
\begin{theorem}
	Under ML decoding, there exists a $d>0$ such that  for all sufficiently large $n$,
	the probability of error in decoding the \emph{n-user codeword} satisfies
	\[{P_{\mathrm{error}}} \leq 2^{-d n}.\]
	\label{thm:log2}
\vspace{-.5cm}
\end{theorem}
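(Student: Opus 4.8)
The plan is to bound the error probability by a union bound over all pairwise confusion events and then show that each term decays fast enough to survive the combinatorial count of competing codewords. I work in the regime where the number of receive antennas grows linearly with the number of users, $m=\mu n$ for a fixed $\mu>0$ (equivalently the per-transmitter antenna ratio $m/n$ is held constant), averaging over both the Gaussian channel $\mb{H}$ and the noise $\bma{\nu}$. By the symmetry of BPSK and of the i.i.d.\ Gaussian channel, I may fix the transmitted codeword $\mb{x}$ and sum only over incorrect codewords $\mb{x}'$.

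First I would write out the pairwise error event. With $\mb{y}=\mb{Hx}+\bma{\nu}$, the event that the ML decoder in \eqref{eq:2} prefers $\mb{x}'$ to $\mb{x}$ is $\|\mb{y}-\mb{Hx}'\|^2\le\|\mb{y}-\mb{Hx}\|^2$, which upon setting $\mb{e}=\mb{x}-\mb{x}'$ reduces to $\langle\bma{\nu},\mb{He}\rangle\le-\tfrac12\|\mb{He}\|^2$. Conditioned on $\mb{H}$ the left side is zero-mean Gaussian with variance $\sigma^2\|\mb{He}\|^2$, so the conditional pairwise error probability equals $Q\!\left(\|\mb{He}\|/(2\sigma)\right)$, where $Q$ is the Gaussian tail. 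Applying the Chernoff bound $Q(t)\le e^{-t^2/2}$ and then averaging over $\mb{H}$ is the key computation: since $\mb{e}$ has exactly $w$ nonzero entries, each equal to $\pm2$, when $\mb{x}$ and $\mb{x}'$ differ in $w$ coordinates, the vector $\mb{He}$ has i.i.d.\ $\mathcal{N}(0,\|\mb{e}\|^2)$ components with $\|\mb{e}\|^2=4w$, so $\|\mb{He}\|^2=4w\,\chi^2_m$ in distribution. The chi-squared moment generating function then gives the clean averaged bound $E_{\mb{H}}\big[e^{-\|\mb{He}\|^2/(8\sigma^2)}\big]=(1+w/\sigma^2)^{-m/2}$.

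Collecting the $\binom{n}{w}$ codewords at Hamming distance $w$ from $\mb{x}$, the union bound yields
\[{P_{\mathrm{error}}}\le\sum_{w=1}^{n}\binom{n}{w}\Big(1+\frac{w}{\sigma^2}\Big)^{-m/2}.\]
It remains to show this sum is at most $2^{-dn}$. Using the binary entropy bound $\binom{n}{w}\le 2^{nH(w/n)}\le 2^{n}$, the exponent of the $w$-th term is at most $n-\tfrac{\mu n}{2}\log_2(1+w/\sigma^2)$.

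The main obstacle, and the only genuinely delicate part, is verifying that this exponent is uniformly negative and bounded away from zero across the entire range $1\le w\le n$, so that the $n$ terms sum to something still of order $2^{-dn}$. I would handle this by splitting the range. For $w\ge\sqrt{n}$ the factor $\log_2(1+w/\sigma^2)$ is already of order $\log_2 n$, so $\tfrac{\mu n}{2}\log_2(1+w/\sigma^2)$ dominates the combinatorial gain $n$ and the exponent tends to $-\infty$. For $1\le w<\sqrt{n}$ I instead use $\binom{n}{w}\le n^{w}\le n^{\sqrt{n}}=2^{o(n)}$ together with monotonicity $(1+w/\sigma^2)^{-m/2}\le(1+1/\sigma^2)^{-m/2}$, making the exponent at most $-\tfrac{\mu n}{2}\log_2(1+1/\sigma^2)+o(n)$. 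Both regimes drop the exponent below $-dn$ for any $d<\tfrac{\mu}{2}\log_2(1+1/\sigma^2)$ once $n$ is large, and absorbing the factor $n$ from the number of terms into a slightly smaller constant finishes the bound. The dominant contribution is the $w=1$ term, so the achievable exponent is essentially $d=\tfrac{\mu}{2}\log_2(1+1/\sigma^2)$, governed by the nearest-neighbour confusions rather than by the combinatorial explosion at large $w$.
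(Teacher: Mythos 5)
Your proof is correct, but note that this paper never actually proves Theorem \ref{thm:log2}: the result is imported from the companion papers \cite{mainakch2012uncoded}, \cite{chowdhury2014uncodedfundamental}, so there is no in-paper argument to compare against line by line. What can be compared is the machinery, and there your route --- the pairwise error event $\langle\bma{\nu},\mb{He}\rangle\le-\tfrac12\|\mb{He}\|^2$, the bound $Q(t)\le e^{-t^2/2}$, averaging over $\mb{H}$ via the chi-squared moment generating function, then a union bound with counting estimates on $\binom{n}{w}$ --- is precisely the machinery this paper deploys for its own Theorem \ref{thm:main} (compare the derivation of (\ref{eq:5}), step $(a2)$ in the proof of Lemma \ref{lemma1}, and the union bound (\ref{eq:7})), specialized to the ML decoder. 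The specialization is what makes your argument both cleaner and stronger: because ML searches over discrete codewords, an error vector with $w$ sign flips has $\|\mb{e}\|^2=4w$ exactly, so $\|\mb{He}\|^2\sim 4w\,\chi^2_m$, every pairwise term is exponentially small, and the $w=1$ nearest-neighbour terms dominate, yielding block-error decay $2^{-dn}$. In the paper's relaxed (ISQ) setting the analogous coefficients $c_j$ range over intervals, the small-$w$ events cannot be controlled (self-interference from the interval search), and the paper consequently bounds only the probability of at least $k'n$ symbol errors, obtaining $2^{-dn\log n}$ for that coarser event --- exactly the performance loss the abstract highlights. Your two-regime split ($w<\sqrt n$ versus $w\ge\sqrt n$) correctly handles the one delicate point, namely that the naive bound $\binom{n}{w}\le 2^n$ does not give a negative exponent for small $w$, and the resulting exponent $d$ arbitrarily close to $\tfrac{\mu}{2}\log_2(1+1/\sigma^2)$ is the expected nearest-neighbour exponent. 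One caveat: your argument requires $m=\mu n$ with $\mu>0$ fixed (which you state explicitly); under the sublinear antenna scalings $\alpha_n=1/(\log n)^{\xi}$ also discussed in the paper, the same computation yields only $2^{-\Theta(n/(\log n)^{\xi})}$, so the exponential form asserted in the theorem is genuinely tied to the linear-scaling regime.
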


In other words the probability of decoding a particular user's transmitted
symbol in error decreases exponentially with the number of users, even though
the users do not employ any coding across time.

A critical component in \cite{mainakch2012uncoded}, \cite{chowdhury2014uncodedfundamental} to achieve this asymptotic
result is the use of ML decoding. In this work we investigate whether we can
achieve reliability even with lower-complexity decoders. In particular, we ask
whether an efficient polynomial time decoder can realize an asymptotically
vanishing probability of error, as was the case with the ML decoder.  

A common approach to relax hard combinatorial optimization problems (such as ML
decoding) is the technique of expanding the search space from discrete points
to intervals or regions \cite{boyd2004convex}. Motivated by this idea, we
consider a convex relaxation of the maximum likelihood decoder search as
follows: 

\begin{IEEEeqnarray}{l+} \mb{\hat{x}}
	=\operatorname{sgn}(\operatorname{argmin}_{\mb{x} \in [-1,+1]^n}
	||\mb{y}-\mb{H x}|| ^2).
	\label{eq:3}
\end{IEEEeqnarray}

In the above $\operatorname{sgn}(\mb{x})$ for $\mb{x} \in \mathbb{R}^n$ refers  to the vector obtained by the coordinatewise application of the signum function defined below for a scalar $x$. \[\operatorname{sgn}(x) = \begin{cases}
	1 & \text{ if $x>0$}.\\
	-1 & \text{ otherwise}.
      \end{cases}\] 

      The modified decoder in (\ref{eq:3}) expands the search for a
      valid \emph{n-user codeword} to the interval $[-1,1]$ per
      dimension and then quantizes it to integer values afterwards,
      hence we call it an ISQ decoder.  This idea of relaxing an
      integer program to a box-constrained program is a well known
      technique and has been studied in different settings, e.g. in
      \cite{donoho2010counting}, \cite{bayati2011dynamics}, where
      different asymptotic properties of this decoder are
      established. While some of the results (especially the
      characterization of the null space of Gaussian random matrices
      \cite{donoho2010counting}, and the behaviour of approximate
      message passing (AMP) type algorithms \cite{bayati2011dynamics}
      with the box constraints) from these works do give us insights
      into the expected behaviour of the box constrained decoder in
      some regimes (e.g. $\frac{m}{n} > 0.5$ with BPSK transmissions),
      the regime of an arbitrarily small fraction of the
      per-transmitter number of receiver antennas is still not fully
      characterized. In fact, for the AMP decoder, it can be shown
      that if $\frac{m}{n} < 0.5$, the number of symbol errors in the
      decoded block would be $\Theta (n)$, i.e. the number of
      incorrectly decoded symbols is linear in the number of
      transmitting users. We consider a slight modification of the
      box-constrained decoder and with this modification, are able to
      show asymptotic reliability in a sense made precise below.

      Note that in the regime where the channel matrix is
      underdetermined (i.e. $m < n$ ) the above procedure may not give
      a unique solution. If an \emph{n-user} codeword $\hat{\mb{x}}$
      is a solution, then any codeword of the form $\hat{\mb{x}} +
      \bma{Z \beta}$ is also a solution. Here $\mb{Z}$ is a basis for
      the right null space of $\mb{H}$, i.e. $\mb{Z}$ is such that
      $\mathbf{H}\mb{ Z} = \mathbf{0}$ and $\bma{\beta} \in
      \mathbb{R}^{(1-\alpha) n}$. Thus the ISQ decoder, in this case,
      cannot uniquely specify a solution by itself. It would, in
      general, give an affine subspace as a solution. In order to
      specify a unique solution, we propose a randomization step
      (randomized ISQ or r-ISQ). Specifically, we propose a family of
      distributions and show that, for estimates drawn according to
      this general family of distributions, we can achieve reliability
      in a sense which is made precise in the following sections. We
      further show that it is possible to sample efficiently from a
      member of this family of distributions. For specificity let us
      consider the following decoder.
\begin{IEEEeqnarray}{l+} \text{r-ISQ:} \,\mb{\hat{x}}
	=\operatorname{sgn}(\operatorname{argmin}_{\mb{x} \in S }
	||\mb{x_r}-\mb{ x}||_{\infty}). \nonumber
	\label{eq:3a}
\end{IEEEeqnarray}
In the above $S = \{\mb{x}: \mb{x} \in \op{argmin}_{\mb{z}}||\mb{y} - \mb{Hz}||\}$, and $\mb{x_r} \sim \op{Unif}([-1,1]^n)$. Since both $\mb{x_r}$ and $S$ can be computed efficiently (in polynomial time, for reasons discussed in later sections), we see that there is an efficient algorithm to achieve asymptotic reliability without employing coding.

However there is a performance hit, relative to ML decoders, when we move to
r-ISQ decoders. This is in terms of the rate of decay of the error probability
with the number of transmitting users. Although the error probability seen by
each user vanishes to zero, we do not have an upper bound for the probability of
having at least one symbol error in the \emph{n-user} codeword, which is in
contrast to ML, where the block error probability decays exponentially.  This
lack of exponential decay with the simpler decoder is primarily due to the self
interference due to the search over intervals. Thus, in particular, we do seem
to lose the exponential fall off in the probability of error that is achieved with the ML decoder.  However, the number of symbol errors in the
decoded block in the  asymptotic limit is at most sub linear, i.e. the
probability that a constant fraction of the transmitter symbols are incorrectly
decoded can be made arbitrarily small for a large enough system size. Defining
$P_e^{k'}$ to be the probability of incorrectly decoding  at least $k'n$ out of
$n$ transmitted symbols, the following states a bound on $P_e^{k'}$.

\begin{theorem}
	Under r-ISQ decoding, for $m = \alpha n$ , $\alpha > 0$ and any constant $k>0$, there exists a $d>0$ such that
	for all sufficiently large  $n$,
	\[P_{e}^{k} \leq 2^{-d n \log n}.\]
	\label{thm:main}
\vspace{-0.5cm}
\end{theorem}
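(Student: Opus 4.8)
The plan is to reduce the symbol-error count to a statement about how many coordinates of a projected null-space vector can be large, and then to control that count by a union bound over possible error supports. First I would characterize the feasible set $S$. Since $\mb{H}$ is a wide Gaussian matrix it has full row rank almost surely, so the least-squares residual is zero and $S=\{\mb{z}:\mb{Hz}=\mb{y}\}$ is an affine subspace of dimension $(1-\alpha)n$. Let $\mb{x}_S$ be the $\ell_2$-nearest point of $S$ to the transmitted $\mb{x}$; the two coincide in the noiseless case, and in general $\mb{x}_S-\mb{x}=\mb{H}^{+}\bma{\nu}$, whose $\ell_2$ norm is $O(\sigma)$ by the Bai--Yin control $\op{\sigma_{\min}}(\mb{H})\approx\sqrt{n}-\sqrt{m}$. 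Every $\mb{z}\in S$ then equals $\mb{x}_S+\mb{v}$ with $\mb{v}$ in the null space $N=\ker\mb{H}$, and the r-ISQ output is $\op{sgn}(\hat{\mb{x}})$ with $\hat{\mb{x}}=\mb{x}_S+\hat{\mb{v}}$, where $\hat{\mb{v}}=\op{argmin}_{\mb{v}\in N}\|\mb{w}-\mb{v}\|_{\infty}$ and $\mb{w}=\mb{x}_r-\mb{x}_S$. A sign error at coordinate $i$ forces $|\hat{v}_i|\ge 1-|(\mb{x}_S-\mb{x})_i|$; since $\|\mb{x}_S-\mb{x}\|_2=O(\sigma)$, at most a constant $O(\sigma^2)$ coordinates have $|(\mb{x}_S-\mb{x})_i|\ge\tfrac12$, so the number of symbol errors is at most $O(\sigma^2)+\#\{i:|\hat{v}_i|\ge\tfrac12\}$ and it suffices to bound the probability that this count reaches $kn$.

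Next I would fix a support $T$ with $|T|=kn$ and bound the probability that $|\hat{v}_i|\ge\tfrac12$ holds simultaneously for all $i\in T$; a union bound over the binomial factor $\binom{n}{kn}=2^{O(n)}$ then finishes the argument provided the per-support bound has the form $(C/n^{\gamma})^{kn}$ for some $\gamma>0$, because $(C/n^{\gamma})^{kn}=2^{-\gamma kn\log n+O(n)}$ dominates $2^{O(n)}$ and yields $2^{-dn\log n}$ for any $d<\gamma k$. To expose a polynomially small factor per coordinate I would use the dual linear-programming description of the $\ell_\infty$ projection onto $N$, namely $\|\mb{w}-\hat{\mb{v}}\|_{\infty}=\max_{\bma{\lambda}}\bma{\lambda}^{\!\top}\mb{Hw}/\|\mb{H}^{\!\top}\bma{\lambda}\|_1$, which ties each event $|\hat{v}_i|\ge\tfrac12$ to a normalized Gaussian functional of the column $\mb{h}_i$ exceeding a threshold.

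The randomization $\mb{x}_r\sim\op{Unif}([-1,1]^n)$ (one convenient member of the proposed family), independent of $\mb{H}$ and $\bma{\nu}$, plays a two-fold role: it makes $\mb{w}$ a fresh random vector whose coordinates supply an independent anti-concentration factor, and it lets me set the threshold in the Gaussian tail at scale $\tau_n\sim\sqrt{\log n}$, so that the per-coordinate probability $e^{-\Theta(\tau_n^2)}=n^{-\Theta(1)}$ beats the entropy $2^{O(n)}$ once taken to the power $|T|$. Concretely I would condition on a high-probability ``good'' event for $\mb{H}$ -- controlled least singular value, near-isometry on sparse supports, and a null-space spreading property preventing any $\mb{v}\in N$ from concentrating its mass on few coordinates -- whose complement I expect to have probability at most $2^{-cn}$, and on this event bound the conditional probability of the coordinatewise events using the independence of the $x_{r,i}$.

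The hard part will be this per-support bound: showing that the $|T|$ events ``$|\hat{v}_i|\ge\tfrac12$'' are sub-multiplicative with a genuinely polynomially small per-coordinate factor. Because $\hat{\mb{v}}$ is a single global $\ell_\infty$ projection onto the random subspace $N$, the coordinate events are coupled through the active constraints of the defining linear program and are \emph{not} independent, so the crux is to decouple them -- either by a negative-association argument for the projection, or by covering the low-dimensional family of candidate error directions in $N$ by a net of size $e^{O(n)}$ and paying a Gaussian deviation $e^{-\Theta(n\log n)}$ for each net point to exhibit $kn$ large coordinates. Establishing this decoupling, together with checking that the $\sqrt{\log n}$-scale threshold can be met uniformly over the net while the noise correction $\mb{H}^{+}\bma{\nu}$ perturbs the count by only $o(n)$, is where essentially all the work lies; the concluding union bound and the passage to \emph{any} $k>0$ are then routine.
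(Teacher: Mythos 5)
Your reduction is sound as far as it goes: writing the r-ISQ output as $\mb{x}_S+\hat{\mb{v}}$ with $\hat{\mb{v}}$ the $\ell_\infty$-projection of $\mb{x_r}-\mb{x}_S$ onto $\ker\mb{H}$, absorbing the noise term $\mb{H}^{+}\bma{\nu}$ into $o(n)$ corrupted coordinates, and observing that a per-support bound of the form $(C/n^{\gamma})^{kn}$ beats both $\binom{n}{kn}$ and the target $2^{-dn\log n}$ are all correct, and they match the structure of the problem. But the proposal stops exactly where the proof has to start: the joint bound on the events $\{|\hat{v}_i|\ge 1/2,\ i\in T\}$ is deferred to ``either a negative-association argument or a net,'' and neither is carried out. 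The negative-association route has no support: the coordinates of an $\ell_\infty$ projection onto a random subspace are coupled through the active constraints of the defining linear program, and you give no reason they should be negatively associated. The net route, as you describe it (``paying a Gaussian deviation $e^{-\Theta(n\log n)}$ for each net point to exhibit $kn$ large coordinates''), founders on a fact the paper itself emphasizes: the null space of $\mb{H}$ \emph{does} contain vectors with $\Theta(n)$ coordinates of constant size (the ``bad solutions'' of the AMP literature, which differ from the truth in $\Theta(n)$ symbols). So no statement about $\mb{H}$ alone --- no non-existence claim over a net of $\ker\mb{H}$ --- can be what you need; and the net-to-continuum transfer fails quantitatively anyway, since within an $\epsilon$-$\ell_\infty$ ball the value of $\|\mb{H}\mb{u}\|$ can move by $O(\epsilon n)\gg\sqrt{n\log n}$. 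The randomness of $\mb{x_r}$ must therefore carry the argument --- one must show the projection of a uniform point avoids the bad part of $S$ --- and your mechanism for that, independent per-coordinate anti-concentration at scale $\sqrt{\log n}$, is precisely what the coupling you yourself acknowledge forbids you from assuming.

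For comparison, the paper sidesteps the decoupling problem entirely. It discretizes the box into a randomized $(\epsilon,\bma{\delta})$-grid with $2^{O(n)}$ points; for each \emph{fixed} grid point with at least $k'n$ sign flips it bounds the error exponent via the chi-squared moment generating function, $E\exp(-t\|\sum_j c_j\mb{h_j}\|^2)=(1+2t\sum_j c_j^2)^{-\alpha n/2}\le(1+2tk'n)^{-\alpha n/2}=e^{-\Theta(n\log n)}$ --- note the $n\log n$ comes from $\Theta(n)$ energy raised to the power $\alpha n/2$, not from $\sqrt{\log n}$-scale per-coordinate tails --- and then union-bounds over the $2^{O(n)}$ grid points and supports. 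It transfers to the continuous solution set via two steps for which you have no analogue: Lemma \ref{lemma2} (by convexity, $S$ passes within $\epsilon$ in $\ell_\infty$ of the grid minimizer, so projecting the grid output onto $S$ can flip only near-zero coordinates, whose number is controlled by the same union bound), and the closing argument that the law of $P_S(\mb{x_r})$ lies in the family of laws of $P_S(\hat{\mb{x}}_{\epsilon,\bma{\delta}})$ induced by randomizing $\bma{\delta}$, so that the grid guarantees apply to the r-ISQ output. Supplying those two steps --- or a genuine decoupling argument in your framework --- is what the plan is missing, and by your own admission it is ``where essentially all the work lies.''
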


We mention that the same proof techniques used to establish the
above result can also be used to get sharper bounds, i.e. for $k =
\frac{1}{n^{\gamma}}$ for some $0<\gamma <1$. Thus the per-user error
probability is asymptotically less than $\frac{1}{n^{\gamma}}$. Also, while
in this section we focus on the case where the ratio ($\alpha$) of the number of receiver
antennas to the number of transmitter antennas is constant, we point out that
the same techniques continue to hold even for  \[\alpha_n =
\frac{1}{(\log n)^{\xi}}, \xi<1.\] By making $\xi$ close to $1$ we see that we
can come arbitrarily close to the optimal scaling established for the ML
decoder in \cite{mainakch2012uncoded}.  Thus we see that by exploiting the
\emph{diversity} (richness in the scattering environment), one can not only
get arbitrarily reliable communication in different asymptotic regimes (in this
case for a large number of transmitters) without employing coding or ML
decoders, but can also achieve optimal scaling for the per transmitter  number of receiver
antennas.

We now use the bound in Theorem \ref{thm:main} to bound the probability of symbol error seen by each transmitter.

\begin{theorem} The probability of error with the r-ISQ decoder seen by any
transmitting user vanishes in the limit of an asymptotically large number of
transmitters, with the per-transmitter number of receive antennas being any
constant $\alpha > 0$.
	\label{thm:main2}
\end{theorem}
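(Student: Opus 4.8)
The plan is to derive the per-user statement from the block-level bound of Theorem \ref{thm:main} by combining an exchangeability argument with a tail-sum (layer-cake) estimate on the number of symbol errors. First I would argue that the per-user error probability is the same for every user. The channel matrix $\mb{H}$ has i.i.d.\ $\mathcal{N}(0,1)$ entries, the dither $\mb{x_r}$ is $\op{Unif}([-1,1]^n)$, and the BPSK constellation is sign-symmetric; hence flipping the sign of any column of $\mb{H}$ together with the corresponding transmitted symbol leaves the law of $\mb{y}$ unchanged, so without loss of generality the transmitted codeword may be taken to be $\mb{x}=\mathbf{1}$. Moreover, permuting the columns of $\mb{H}$ and the coordinates of $\mb{x_r}$ by any permutation $\pi$ permutes the minimizer set $S$ and hence the r-ISQ output equivariantly, since both the Euclidean objective defining $S$ and the $\ell_\infty$ objective are permutation invariant. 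Because the joint law of $(\mb{H},\bma{\nu},\mb{x_r})$ is invariant under such permutations, the probability that user $j$ is decoded incorrectly is independent of $j$; call this common value $p$, and let $N$ denote the number of symbol errors in the decoded block, so that $\mathbb{E}[N]=np$ and $P_e^{k}=\Pr(N\ge kn)$.

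Next I would bound $\mathbb{E}[N]$, and therefore $p=\mathbb{E}[N]/n$, for an arbitrary fixed $k>0$. Using the tail-sum identity $\mathbb{E}[N]=\sum_{i=1}^{n}\Pr(N\ge i)$, I would split the sum at the index $\lceil kn\rceil$: for the indices below this threshold I bound each term trivially by $1$, contributing at most $kn$, while for the remaining at most $n$ indices I use $\Pr(N\ge i)\le \Pr(N\ge kn)=P_e^{k}$, which by Theorem \ref{thm:main} is at most $2^{-dn\log n}$ for all large $n$. This gives
\[
p=\frac{\mathbb{E}[N]}{n}\le k+P_e^{k}\le k+2^{-dn\log n}.
\]
Taking $n\to\infty$ yields $\limsup_{n\to\infty}p\le k$, and since $k>0$ was arbitrary I conclude $\lim_{n\to\infty}p=0$, which is exactly the vanishing of the per-user error probability. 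The same reasoning, with the sharpened form of Theorem \ref{thm:main} at the sublinear threshold $k=n^{-\gamma}$ mentioned in the remark, in fact shows $p\lesssim n^{-\gamma}$, giving a rate; and the whole argument carries over verbatim under the scaling $m=\alpha_n n$ with $\alpha_n=(\log n)^{-\xi}$, $\xi<1$.

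I expect the main obstacle to be the equivariance step rather than the counting. One must verify that the r-ISQ selection rule is genuinely permutation equivariant even when the $\ell_\infty$-closest point of $S$ to $\mb{x_r}$ is not unique. I would handle this by observing that non-uniqueness of the $\op{argmin}$ occurs only on a measure-zero set of $(\mb{H},\bma{\nu},\mb{x_r})$ and by fixing a deterministic, permutation-consistent tie-breaking rule on the complement, so that the error indicators remain exchangeable and the identity $\mathbb{E}[N]=np$ is preserved. A secondary point to confirm is that $P_e^{k}$ as used above coincides with $\Pr(N\ge kn)$ in the paper's definition; once the exchangeability is in place, the decomposition and the limit are routine.
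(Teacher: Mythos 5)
Your proposal is correct and follows essentially the same route as the paper: both use the symmetry of the system to identify the per-user error probability with $\mathbb{E}[N]/n$, bound it by $k + P_e^{k}$ (your tail-sum split at $\lceil kn\rceil$ is just the paper's conditioning on the event of at least $kn$ errors), and then invoke Theorem \ref{thm:main} to make the second term vanish, with $k>0$ arbitrary. The additional care you take with permutation equivariance and tie-breaking merely elaborates the paper's one-line appeal to symmetry.
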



The remainder of this paper discusses the proofs of Theorems \ref{thm:main} and \ref{thm:main2} and points out suitable generalizations using the same proof techniques.

\section{An Upper Bound on the Decoding Error}

We first present an upper bound on the probability of decoding error. Most of
the steps described in this are similar to what was used in
\cite{chowdhury2013reliable}, modified to take into account the fact that there
is a non-trivial null space (so the solution set in general may not be unique).
We look at the pairwise error probability of mistaking the  transmitted
codeword with one differing in $k' n$ symbols. For (\ref{eq:3}), the
probability of mistaking a codeword $\mb{x_0}$ for another differing in $i$
symbol positions is given by

\begin{IEEEeqnarray}{rCl}
	P_{e,\mb{b_i}} &\leq& Q \left(\min_{{\mb{x}}:\operatorname{supp}(\operatorname{sgn}(\mb{x})-\mb{x_0})=\mb{b_i}} \frac{|| \mb{H}(\mb{x}-\mb{x_0})||}{2\sigma}\right) .
\end{IEEEeqnarray}

Here $Q(x) = \frac{1}{\sqrt{2\pi}}\int_x^\infty e^{-x^2/2} dx$, $\mb{b_i} $ is
a vector of size $i$ whose entries are positions where the codewords differ
(arranged in increasing order), and $\mb{b_i}(j)$ is the $j^{th}$ symbol
position where the codewords differ. We point out here that $\mb{b_i}$ has a
one-to-one correspondence with a subset of $\{1,\ldots,n\}$ of cardinality $i$.
$||\mb{x}||_0$ refers to the number of non-zero entries in $\mb{x}$.
$\operatorname{supp}(\mb{x})$ refers to the support (i.e. locations of the non-zero entries of vector $\mb{x}$).

Note that the error probability above is independent of which $\mb{x_0}$ is
chosen, when averaged over the distribution of $\mb{H}$. Hence, choosing $\mb{x_0} = -\mb{ 1}$, we note that the last expression
can be rewritten as follows.

\begin{IEEEeqnarray}{C}
	P_{e,\mb{b_i}}\leq	Q \left(\min_{\substack{1\leq c_j \leq 2 \, \forall j \in \mb{b_i}\\ 0 \leq c_j \leq 1 \, \forall j \in \mb{b_i}^c }} \frac{||\sum_{j=1}^n c_j \mb{h_{b_i(j)}}||}{2\sigma}\right) \nonumber \\
	\overset{(a)}\leq \frac{1}{2}\exp{\left(-\min_{\substack{1\leq c_j \leq 2 \, \forall j \in \mb{b_i}\\ 0 \leq c_j \leq 1 \, \forall j \in \mb{b_i}^c }} \frac{||\sum_{j=1}^n c_j \mb{h_{b_i(j)}}||^2}{8 \sigma^2}\right)}
	\label{eq:5}
\end{IEEEeqnarray}
where (a) follows because $Q(x) \leq \frac{1}{2} \exp(\frac{-x^2}{2})$.
We observe now that  (\ref{eq:5}) averaged over the channel realizations is independent of the particular subset of symbols that are decoded in error and depends only on the size $i$ of such a subset. Let's call this averaged probability of error  $P_i$. Thus we have 
\begin{IEEEeqnarray*}{C}
	P_{i} \triangleq E_{\mb{H}} e^{\left(-\min_{\substack{1\leq c_j \leq 2 \, \forall \, j \in \{1,\ldots,i\}\\ 0 \leq c_j \leq 1 \, \forall j \in  \{i+1,\ldots,n\} }} \frac{||\sum_{j=1}^n c_j \mb{h_{j}}||^2}{8\sigma^2}\right)}.
\end{IEEEeqnarray*}

If $P_e^{k'}$ is the probability of error of decoding at least $k'n$ transmitter symbols incorrectly, and $S_i$ refers to the set of all vectors representing subsets of size $i$ from $\{1,\ldots,n\}$, then a union bound for the error probability is

\begin{IEEEeqnarray}{rCl}
	P_e^{k'} &\leq& \sum_{k'n \leq i \leq n} \sum_{\mb{b} \in S_i} \frac{1}{2} P_i\\
	&\leq& \sum_{k'n  \leq i \leq n } \binom{n}{i} \frac{1}{2} P_i.
	\label{eq:7}
\end{IEEEeqnarray}
Note that, by the symmetry of the system, the probability of error $P_e$ seen by each transmitting user is upper bounded by 
\[P_e \leq k' + P_e^{k'}.\]
We show that for any small $k'$, there exists a large enough system size for which $P_e^{k'}$ becomes exponentially small, even with a convex decoder of much lower complexity. This will establish Theorem \ref{thm:main2}.
\section{Asymptotic analysis of the upper bound}
We first prove bounds on the exponent appearing in the bound for $P_{e,\mb{b_i}}$ in (\ref{eq:5}). Specifically, we look at (ignoring a constant scaling of $8 \sigma^2$) 
\[\min_{\substack{1\leq c_j \leq 2 \, \forall j \in \{1,\ldots,i\}\\ 0 \leq c_j \leq 1 \, \forall j \in  \{i+1,\ldots,n\} }}  ||\sum_{j=1}^n c_j \mb{h_{j}}||^2.\]

Before we describe the proof, we define the ($\epsilon,\bma{\delta}$)-grid inside
the hypercube $[-1,+1]^n$, for some $0<\epsilon <0.25$. This grid is simply the
set of points \[\mathcal{G}_{n,\epsilon,\bma{\delta}}=\{\mb{x}:x_i \bmod \epsilon
= \delta_i, -1 \leq x_i\leq 1 \,\,\forall i\}.\] As an illustration, for $\bma{\delta} =
\bma{0}$, it may be rewritten as \[\mathcal{G}_{n,\epsilon,
\bma{\delta}}=\{-1,-1+\epsilon,\ldots,1-\epsilon,1\}^n.\] if
$\frac{1}{\epsilon} \in \mathbb{N}$. 

We now introduce the $(\epsilon,\bma{\delta})$-ISQ decoder, so named  because it replaces the interval search in the ISQ decoder by an ($\epsilon,\bma{\delta}$)-grid search:

\begin{IEEEeqnarray}{l+} \text{$(\epsilon,\bma{\delta})$-ISQ:} \,\mb{\hat{x}}_{\epsilon,\bma{\delta}}
	=\operatorname{sgn}(\operatorname{argmin}_{\mb{x} \in \mathcal{G}_{n,\epsilon,\bma{\delta}} }
	||\mb{y}-\mb{H x}|| ^2). \nonumber
	\label{eq:3a}
\end{IEEEeqnarray}

The $(\epsilon,\bma{\delta})$-grid error probabilities are defined similar to the
definitions for the ISQ decoder in the previous section, and are indicated by
an $\epsilon,\bma{\delta}$ subscript. We now collect some observations about
the grid error probabilities and use a union bounding argument for
$P_{e,\mb{b_i}, \epsilon,\bma{\delta}}$. Some of these results for the grid error
probabilities have already been derived in \cite{chowdhury2013reliable}  and have
been reproduced here for completeness and continuity of presentation. We
require the following lemma about the negative of the exponent in the error
probability $P_{e,\mb{b_i},\epsilon,\bma{\delta}}$:
\[\min_{\substack{ c_j \in [1,2], c_j \bmod \epsilon = \delta_j \, \forall j
\in \{1,\ldots,i\}\\ c_j \in [0,1], c_j \bmod \epsilon = \delta_j \, \forall j
\in  \{i+1,\ldots,n\} }}  ||\sum_{j=1}^n c_j \mb{h_{j}}||^2.\]

\begin{lemma} For any $i> k'n$, there exists an $n_0$ and an $a>0$, such that for
	all $n>n_0$, \[P(|| \sum_{j=1}^n c_j \mb{h_{j}}||^2 < a n \log n) \leq
		\exp(- a n \log n).\] 
                \label{lemma1}
\end{lemma}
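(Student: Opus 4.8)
The plan is to reduce the statement to a lower-tail bound for a single chi-squared random variable at one grid point, and then control the minimum over the whole feasible grid by a union bound. Fix any feasible grid point $\mb{c} = (c_1,\ldots,c_n)$, i.e.\ one with $c_j \in [1,2]$ for $j \in \{1,\ldots,i\}$ and $c_j \in [0,1]$ for the remaining coordinates (all satisfying the modular constraint $c_j \bmod \epsilon = \delta_j$). Since the columns $\mb{h_j}$ are i.i.d.\ $\mathcal{N}(\mb{0},\mb{I}_m)$, the vector $\sum_{j=1}^n c_j \mb{h_j} = \mb{Hc}$ is Gaussian with distribution $\mathcal{N}(\mb{0}, \|\mb{c}\|^2 \mb{I}_m)$, so $\|\mb{Hc}\|^2 = \|\mb{c}\|^2 Z$, where $Z$ is chi-squared with $m = \alpha n$ degrees of freedom. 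The key structural fact is that pinning the first $i > k'n$ coordinates to $[1,2]$ forces $\|\mb{c}\|^2 \geq \sum_{j=1}^i c_j^2 \geq i > k'n$.

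Next I would obtain a per-point tail bound. Because $\|\mb{c}\|^2 \geq k'n$, the event $\{\|\mb{Hc}\|^2 < a n\log n\}$ is contained in $\{Z < \tfrac{a\log n}{k'}\}$, which asks $Z$ to fall far below its mean $m = \alpha n$. I would bound this lower tail by Chernoff's inequality using $E[e^{-sZ}] = (1+2s)^{-m/2}$ for $s>0$: optimizing $e^{st}(1+2s)^{-m/2}$ over $s$ at $t = \tfrac{a\log n}{k'}$ yields a bound of the form $e^{-t/2}(et/m)^{m/2}$. Since $t = \Theta(\log n)$ while $m = \alpha n$, the dominant factor is $(et/m)^{m/2} = \exp\!\big(-\tfrac{\alpha n}{2}\log(m/t)(1+o(1))\big)$ with $\log(m/t) = (1-o(1))\log n$, so for a single feasible point
\[
P\!\left(\|\mb{Hc}\|^2 < an\log n\right) \leq \exp\!\left(-\tfrac{\alpha}{2}(1-o(1))\, n\log n\right).
\]

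Finally I would pass from one point to the minimum over the grid by a union bound. Each coordinate ranges over at most $\lceil 2/\epsilon\rceil$ grid values, so the number of feasible grid points is at most $(2/\epsilon)^n = \exp\!\big(n\log(2/\epsilon)\big)$. Hence
\[
P\!\left(\min_{\mb{c}} \|\mb{Hc}\|^2 < an\log n\right) \leq \exp\!\left(n\log(2/\epsilon) - \tfrac{\alpha}{2}(1-o(1))\,n\log n\right).
\]
The grid count contributes only an $\exp(\Theta(n))$ factor, negligible against the $\exp(-\Theta(n\log n))$ per-point decay; choosing any $a < \alpha/2$ and $n_0$ large enough that $\tfrac{\alpha}{2}(1-o(1)) - \tfrac{\log(2/\epsilon)}{\log n} \geq a$ gives $P \leq \exp(-an\log n)$ for all $n > n_0$, as claimed.

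The main obstacle, and the crux of the argument, is exactly this tension between the exponentially many ($\epsilon^{-n}$) grid points that must be controlled simultaneously and the rate at which the per-point probability decays. The union bound survives only because the lower bound $\|\mb{c}\|^2 \geq k'n$ — a linear-in-$n$ energy coming from the $i > k'n$ coordinates confined to $[1,2]$ — pushes the target $an\log n$ down to a vanishing fraction $\Theta(\log n / n)$ of the chi-squared mean, producing super-exponential $n\log n$ decay. This is precisely why the exponent in Theorem~\ref{thm:main} carries the extra $\log n$ factor, and also hints at why genuinely exponential-in-block-length decay is unattainable with this interval-relaxation decoder.
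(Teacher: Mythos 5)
Your proof is correct and follows essentially the same route as the paper's: both rest on Markov's inequality applied to $e^{-s\|\mb{Hc}\|^2}$ together with the chi-squared moment generating function $(1+2s\|\mb{c}\|^2)^{-\alpha n/2}$, with the key input $\|\mb{c}\|^2 \geq k'n$ forced by the $i > k'n$ coordinates confined to $[1,2]$. The only differences are minor refinements: you optimize the Chernoff parameter (yielding a constant near $\alpha/2$ where the paper simply fixes $t=1$ and settles for $\alpha/4$), and you fold the $(1/\epsilon)^n$-grid union bound into the lemma itself, whereas the paper states the lemma per grid point and performs that same union bound immediately afterward when bounding $P_{e,\epsilon,\bma{\delta}}^{k'}$.
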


\begin{proof}

We can show this using Markov's inequality. Let $a_1 = \frac{\alpha}{4}.$ Then 

\begin{IEEEeqnarray}{rCl} 
&&P(|| \sum_{j=1}^n c_j \mb{h_{j}}||^2 < a_1 n \log n)\\ 
&=& P(\exp(-t || \sum_{j=1}^n c_j \mb{h_{j}}||^2) > \exp(-t a_1 n \log n)) \nonumber \\ 
&& \overset{(a1)} \leq \exp(t a_1 n \log n) E \exp(-t || \sum_{j=1}^n c_j \mb{h_{j}}||^2) \label{eq:general} \\ 
&& \overset{(a2)}= \exp(t a_1 n \log n) (1+2t \sum_j c_j^2 )^{(-\alpha n/2)}  \\ &&\overset{(b)} \leq  \exp(t a_1 n \log n) (1+2t k' n )^{(-\alpha n/2)}   \\ 
&& \overset{(c)} \leq \exp(-\tilde{a} n \log n) \,\, \text{ for large enough n} \label{eq:gaussian}.  
\end{IEEEeqnarray} 

In the above $(a1)$ follows from Markov's
inequality, $(a2)$ follows from the moment generating function of a chi-squared
random variable, $(b)$ follows from the fact that for at least $k' n$ errors,
\[\sum_j c_j^2 \geq k'n,\] and $(c)$ follows by choosing $t=1$, and defining e.g.
$\tilde{a} = \frac{\alpha}{4}.$ Defining $a = \min(a_1, \tilde{a}) =
\frac{\alpha}{4}$, we get the claim in the lemma. Thus the claim is established
for $\mb{H}$ with $\mathcal{N}(0,1)$ entries. 


\end{proof} 
By observing that for a positive r.v., $P(x<d_0)< \exp(-d_0) $
implies 
\begin{IEEEeqnarray}{rCl}
&&E(\exp(-x)) \\&\leq& \exp(-d_0) + (1-\exp(-d_0))\exp(-d_0) \\ &\leq& 2 \exp(-d_0),	
\end{IEEEeqnarray}

we get that 

\begin{IEEEeqnarray}{rCl}
P_{i,\epsilon, \bma{\delta}} &\leq& E (\exp( -|| \sum_{j=1}^n c_j \mb{h_{j}}||^2)) \\
&\leq& \exp(-a n \log n) \,\, \text{ for large enough n}.
\end{IEEEeqnarray}

The probability of the event that there are at least $k' n$ symbols that are
decoded incorrectly can then be union bounded as follows.

\begin{IEEEeqnarray}{rCl}
	P_{e,\epsilon, \bma{\delta}}^{k'} &\leq& \sum_{i=k'n}^n \binom{n}{i} \left(\frac{1}{\epsilon} \right)^n P_{i,\epsilon,\bma{\delta}} \\
	& \overset{(d1)} \leq & n 2^{n(\max_{k'n \leq i \leq n} H_2(\frac{i}{n}) -\log(\epsilon) - a  \log n )} \,\, \text{ for }\nonumber  \\ && \text{ $a>0$ and large enough $n$} \\
	& \overset{(d2)} \leq & 2^{-a n \log n} \,\,\text{for a large enough n}. \nonumber
	\label{eq:11}
\end{IEEEeqnarray} 

In the above, \[H_2(x) = -x \log x - (1-x) \log(1-x).\]$ (d1)$ follows by
noting that \[\binom{n}{i} \leq 2^{H_2(i/n)}\] and $(d2)$ follows from the
observation that $H_2(\cdot)$ is bounded above by a constant. We now note that
by introducing an arbitrary distribution $f(\bma{\delta})$ on $\bma{\delta}$, i.e. randomizing the grid, there would be a
distribution induced on $\bma{\hat{x}}_{\epsilon,\bma{\delta}}$. Let's call that $\hat{f}(\bma{\hat{x}}_{\epsilon,\bma{\delta}})$. Thus
statements about the probability of error associated with  $\mathbf{\hat{x}}_{\epsilon,\bma{\delta}}$ would continue to
hold even for samples $\mathbf{y}$ drawn from $\hat{f}(\mathbf{y})$. Note that
sampling from this distribution may still be of exponential complexity. Let's
call this decoder the r-$(\epsilon,\bma{\delta})$ ISQ decoder, where r stands for
randomized.

We now relate the solution from the search over the randomized $(\epsilon,\bma{\delta})$-grid $\mathcal{G}_{n,\epsilon,\bma{\delta}}$
(i.e. the output of the r-$(\epsilon,\bma{\delta})$ ISQ decoder) to the solution ($\hat{\mb{x}}$) of the r-ISQ
decoder. Note that, in general, the ISQ decoder will not be unique and there is always an uncertainty due to the right null space of $\mathbf{H}$. Thus if the objective function in the ISQ decoder attains its infimum at $\hat{\mb{x}}$, then it will also attain the same infimum at all points of the following solution set \[S = \{ \mb{x} : \mb{x} = \hat{\mb{x}} + \mb{Z} \bma{\beta}, \mb{HZ} = \bma{0},\bma{\beta} \in \mathbb{R}^{(1-\alpha)n}\}.\] We show
next that a certain randomized choice of solution from this solution set will
be ``good''. Before that however, we introduce some notation. Let the projection of any vector $\mb{y}$ on any set $A$ be defined by
\begin{IEEEeqnarray}{rCl}
	P_A(\mb{y}) &=& \operatorname{argmin}_{\mb{x}\in A} ||\mb{y}-\mb{x}||_{\infty}.
       \label{def:projection}
\end{IEEEeqnarray}
Note that this can be computed efficiently (using interior point algorithms) if
$A$ is an affine subspace. Thus given
$\mathbf{\hat{x}}_{\epsilon, \bma{\delta}}$, one can compute
$P_S(\mathbf{\hat{x}}_{\epsilon,\bma{\delta}})$ efficiently. This is simply the
projection of the solution of the r-$(\epsilon,\bma{\delta})$-ISQ decoder on
the solution space $S$ of the ISQ decoder. Before we proceed we observe a certain
property that this projection enjoys.

\begin{lemma}
	There is at least one point of the solution set $S$ of the ISQ within the $\epsilon-l_{\infty}$ ball around $\hat{\mb{x}}_{\epsilon,\bma{\delta}}$, i.e.,
	$||\mb{\hat{x}}_{\epsilon,\bma{\delta}} - P_S (\mb{\hat{x}}_{\epsilon,\bma{\delta}})||_{\infty} \leq \epsilon$.
\label{lemma2}
\end{lemma}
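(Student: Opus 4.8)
The plan is to reduce the lemma to constructing a \emph{single} feasible point of $S$ lying within $l_\infty$ distance $\epsilon$ of the grid minimizer $\mb{\hat{x}}_{\epsilon,\bma{\delta}}$. Since $P_S$ returns the $l_\infty$-nearest point of $S$ by its definition in (\ref{def:projection}), exhibiting any such point immediately gives $||\mb{\hat{x}}_{\epsilon,\bma{\delta}}-P_S(\mb{\hat{x}}_{\epsilon,\bma{\delta}})||_{\infty}\le\epsilon$. First I would record the two structural facts the argument rests on: (i) $S$ is the affine subspace $\{\mb{x}:\mb{Hx}=\mb{y}\}=\mb{\hat{x}}+\op{null}(\mb{H})$ of continuous minimizers (for the wide, full-row-rank $\mb{H}$ the least-squares residual vanishes), and it meets the cube $[-1,1]^n$ in the underdetermined regime of interest, since the box-constrained minimizer computed by the ISQ decoder is itself a point of $S$; and (ii) the objective $||\mb{y}-\mb{Hx}||^2$ depends on $\mb{x}$ only through its component transverse to $\op{null}(\mb{H})$, so it is \emph{flat} along $S$ and strictly convex in the complementary (row-space) directions.

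Next I would set up the construction. Writing the candidate nearby point as $\mb{\hat{x}}_{\epsilon,\bma{\delta}}+\bma{\Delta}$, membership in $S$ is exactly the linear condition $\mb{H}\bma{\Delta}=-\bma{\rho}$, where $\bma{\rho}=\mb{H}\mb{\hat{x}}_{\epsilon,\bma{\delta}}-\mb{y}$ is the residual of the grid minimizer. The lemma is therefore equivalent to the existence of a correction $\bma{\Delta}$ with $||\bma{\Delta}||_{\infty}\le\epsilon$ and $\mb{H}\bma{\Delta}=-\bma{\rho}$. The geometric picture I would use to produce such a $\bma{\Delta}$ is that the grid $\mathcal{G}_{n,\epsilon,\bma{\delta}}$ tiles the cube into closed cells of side $\epsilon$ whose vertices are grid points, so that any two points sharing a cell are within $l_\infty$ distance $\epsilon$. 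It thus suffices to show that $\mb{\hat{x}}_{\epsilon,\bma{\delta}}$ is a vertex of a cell through which $S$ passes: the point of $S$ inside that cell supplies the correction $\bma{\Delta}$ of norm at most the cell diameter $\epsilon$.

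To establish this cell incidence I would combine global grid-optimality with fact (ii). Starting from a continuous minimizer in $S\cap[-1,1]^n$ and the cell containing it, the vertices of that cell are grid points whose transverse (row-space) displacement from $S$ is at most one grid spacing, so the grid minimum of $||\mb{y}-\mb{Hx}||^2$ is attained at a point whose transverse displacement is likewise controlled. The mechanism is that moving away from $S$ in the transverse directions strictly increases the strictly-convex part of the objective, whereas moving along $\op{null}(\mb{H})$ leaves the objective unchanged and keeps one inside $S$; the neighbouring-grid-point inequalities $f(\mb{\hat{x}}_{\epsilon,\bma{\delta}}\pm\epsilon\,\mb{e}_j)\ge f(\mb{\hat{x}}_{\epsilon,\bma{\delta}})$ valid at an interior minimizer then pin the transverse displacement down to a single cell, forcing the minimizing cell to meet $S$.

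The main obstacle, and the step I would treat most carefully, is exactly this quantitative link. Because the objective is flat along $\op{null}(\mb{H})$, global optimality over the grid constrains $\mb{\hat{x}}_{\epsilon,\bma{\delta}}$ only transversely, and I must convert that transverse control into an $l_\infty$ bound of \emph{exactly} one grid spacing rather than the loose estimate $\op{dist}_{\infty}(\mb{\hat{x}}_{\epsilon,\bma{\delta}},S)\lesssim\kappa(\mb{H})\sqrt{n}\,\epsilon$ that a naive residual-to-distance inequality (via the smallest singular value of $\mb{H}$) would yield. I expect to avoid the conditioning of $\mb{H}$ entirely by arguing at the level of grid cells, so that the final bound $\epsilon$ comes from the cell diameter and from showing the minimizing cell actually intersects $S$, not from any norm estimate on $\bma{\rho}$.
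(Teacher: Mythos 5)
You have in fact reconstructed the paper's own strategy: the paper also reduces the lemma to showing the grid minimizer's cell meets $S$, and disposes of it by contradiction, asserting that if $S$ avoided the $\epsilon$-cube then some vertex of that cube (a grid point) would have a smaller objective value than $\hat{\mb{x}}_{\epsilon,\bma{\delta}}$, contradicting grid optimality. The problem is that your proposal never proves the step you yourself flag as the main obstacle: the claim that the neighbouring-grid-point inequalities ``pin the transverse displacement down to a single cell, forcing the minimizing cell to meet $S$.'' This is stated as a goal with a hoped-for mechanism (flat along $\op{null}(\mb{H})$, strictly convex transversally), not as an argument, and in fact no deterministic, conditioning-free argument of the kind you describe can exist, because the claim is false for general $\mb{H}$. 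Concretely, take $n=2$, $\epsilon = 0.2$, $\bma{\delta}=\bma{0}$, $\mb{h_1}=(1,0)^T$, $\mb{h_2}=(10.5,\,0.1)^T$, $\mb{y}=(0.046,\,0.018)^T$. The origin minimizes $||\mb{y}-\mb{Hx}||^2$ over the entire grid $\{-1,-0.8,\ldots,0.8,1\}^2$ (its value is $0.00244$; all eight neighbours exceed $0.024$, and every grid point with $x_2\neq 0$ exceeds $1$), so the origin is the $(\epsilon,\bma{\delta})$-ISQ output and every optimality inequality you invoke holds there. Yet the unconstrained least-squares solution is $(-1.844,\,0.18)$, the box-constrained ISQ minimizer is $\approx(-1,\,0.0996)$ with value $\approx 6.5\times 10^{-5}$, so $S$ sits at $l_\infty$ distance $1 \gg \epsilon$ from the grid minimizer: the cell-incidence conclusion fails. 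The example survives embedding into the underdetermined regime $m<n$ (append a third column $\mb{h_3}=\mb{h_2}+\bma{\eta}$ with $||\bma{\eta}||$ tiny: the null direction is then nearly $(0,1,-1)^T$, every point of $S$ within a bounded region still has first coordinate near $-1$, and grid optimality of the origin persists by continuity).

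The obstruction is exactly the anisotropy you hoped to argue away: grid optimality controls differences of $||\mb{H}(\cdot)||$-values, and translating that into an $l_\infty$ statement about positions necessarily involves the spectrum of $\mb{H}$; when $\mb{H}^T\mb{H}$ is sufficiently ill-conditioned the grid minimizer can be globally optimal on the grid while sitting many cells away from $S$, transversally. Convexity, flatness along the null space, and cell combinatorics cannot see this, which is why neither your sketch nor the paper's one-line contradiction (``one of the vertices will have a smaller value for $g$'') is a proof --- the quoted assertion is precisely what the example above violates. A secondary slip: your fact (i) identifies $S$ with $\{\mb{x}:\mb{Hx}=\mb{y}\}$ and claims it meets the cube; the paper's $S$ is instead $\hat{\mb{x}}+\op{null}(\mb{H})$ through the box-constrained minimizer, which meets the cube by construction but generally has nonzero residual (as in the example), so your membership condition $\mb{H}\bma{\Delta}=-\bma{\rho}$ needs adjusting. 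To actually close the main gap one would have to abandon the purely deterministic route and exploit the Gaussian ensemble --- e.g., high-probability bounds on the extreme singular values of $\mb{H}$ and of its submatrices --- thereby proving (at best) a probabilistic version of the lemma and propagating its failure probability through the union bounds behind Theorem 2; as written, your proposal, like the paper's sketch, asserts the decisive step rather than establishing it.
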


\begin{proof}
We can show this by contradiction. If the claim is false, we would have that 
the function $g(\mb{x}) =||\mb{y} -\mb{H} \mb{x}||^2$ is strictly
convex over the hypercube $\{ \mb{x} : ||\mb{x} -
\mb{x}_{\epsilon,\bma{\delta}}||_{\infty} \leq \epsilon \}$, with  $S$ lying
totally outside the hypercube. By observing that, in such a case, one of the
vertices will have a smaller value for $g(\mb{x})$ than $g(\mb{\hat{x}}_{\epsilon,\bma{\delta}})$, we arrive at a
contradiction.  
\end{proof}

\begin{figure}
  \centering
  \includegraphics{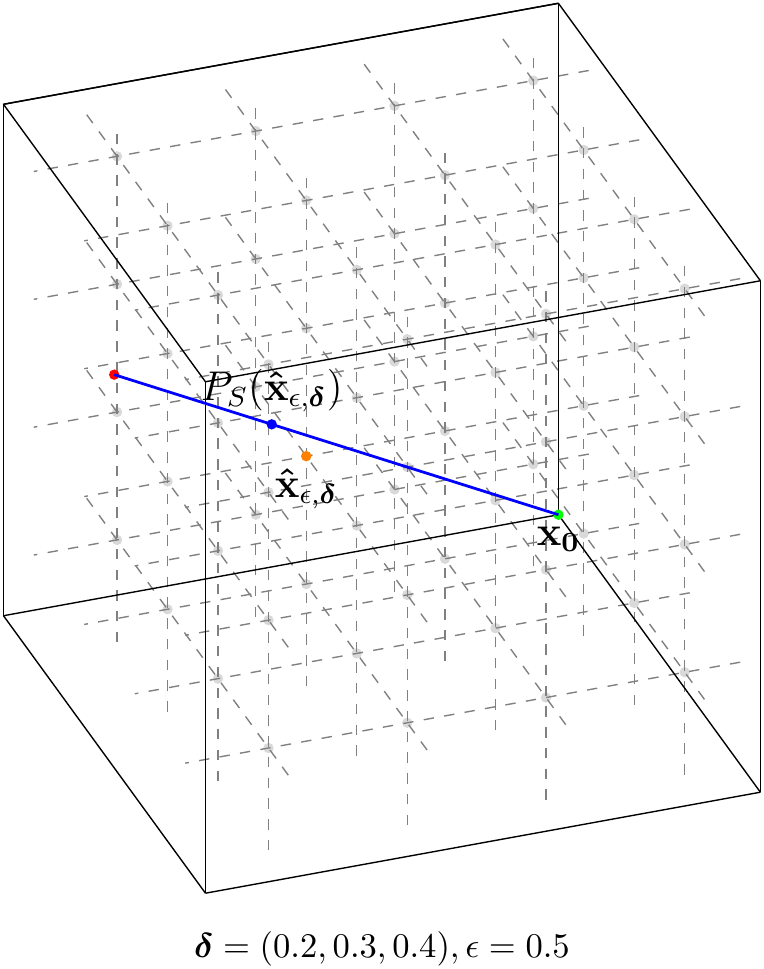}
  \caption{$(\epsilon, \bma{\delta})$-grid for $n=3, \alpha = 2/3$ (dotted grid with gray grid points) for a $\mb{H} \in \mathbb{R}^{2 \times 3}$}
  \label{fig:epsdelta}
\end{figure}
Thus projecting to the solution space $S$ does not change any entry of the vector
$\mathbf{\hat{x}}_{\epsilon,\bma{\delta}}$ by more than $\epsilon$. Since
$\epsilon<0.25$, if $|{\mb{\hat{x}}}_{\epsilon,\bma{\delta},i}| > \epsilon$, the
sign of the corresponding entry of $P_S({\mb{\hat{x}}}_{\epsilon,\bma{\delta}})$
would also be the same as that of $\mathbf{\mb{\hat{x}}}_{\epsilon,\bma{\delta}}$.
The remaining part of the proof is to establish that the sampling of the point
$\mathbf{y} = P_S({\mb{\hat{x}}}_{\epsilon,\bma{\delta}})$ according to
distribution $\hat{f}_S(\mb{y})$ can be done efficiently, i.e. with polynomial complexity (this, in general, is not true for arbitrary multivariate distributions, i.e. \cite{huang2011sampling}). This, together with
the fact that the $|{\mb{\hat{x}}}_{\epsilon,\bma{\delta},i}| $ is less than $
\epsilon$ at most at a sublinear number of coordinates $i$ with overwhelming
probability, establishes the fact that
$P_S({\mb{\hat{x}}}_{\epsilon,\bma{\delta}})$ differs from $\mathbf{x_0}$ in at
most a sublinear number of positions with high probability.

We now relate this randomized projection to the solution of the  r-ISQ decoder.  This simply
takes the affine subspace that is a solution to the ISQ decoder and projects a
random point inside the hypercube on it. Thus  
\begin{IEEEeqnarray}{l+} \text{r-ISQ:}\mb{\hat{x}} =  P_S(\mathbf{x_r}), \mathbf{x_r}\sim \text{Unif}([-1,1]^n) 
	\label{eq:isqr}
\end{IEEEeqnarray}
where $S$ is the solution set of the ISQ decoder. Note that since $S$ is affine
and the sampling is uniform, both can be done efficiently. We now show that the
estimate from this decoder is equal to that of the
r-$(\epsilon,\bma{\delta})$-ISQ decoder for a particular choice of
$f(\bma{\delta})$. This follows from the observation that any distribution on
$\mathbf{x_r}$ would induce a distribution on $S$. This distribution belongs to the family of distributions of the form $\hat{f}_S$ induced by a distribution 
$f(\bma{\delta})$ on $\bma{\delta}$ because the mapping $P_S(\mathbf{\hat{x}}_{\epsilon,\bma{\delta}})$ from $\bma{\delta}$ to $S$ is onto (surjective).
Also, by following the same union bounding technique
used to bound $P_{e,\epsilon,\bma{\delta}}^{k'}$, we get that, for any $k^{''}>0$, the
probability that $\hat{\mb{x}}_{\epsilon,\bma{\delta} }$ has greater than or equal to
$k^{''}n$ entries that are close to zero, (i.e. either $\delta_i-\epsilon,\delta_i,$ or
$\delta_i + \epsilon$) is upper bounded by $\exp(-d_1 n  \log n)$, for some $d_1>0$. Let
\[d_2 = \min(d_1,a).\] Thus we conclude that for a large enough $n$, with
probability at least $1- 2 \exp(-d_2 n \log n)$, the signs of
$P_S(\mb{x_r})$ will match the signs of $\mb{x_0}$ (i.e. the correct
$n$-user codeword) in at least $(1-k^{''}-k^{'}) n $ positions. By choosing
$k^{'}$ and $k^{''}$ small enough we see that the number of mismatches  is
sublinear in the number of transmitting users with overwhelming probability.  The proof of Theorem
\ref{thm:main} is now complete. \qed

To prove Theorem \ref{thm:main2}, we note that, by the symmetry of the system,
the error probability $P_e$ seen by each transmitter is the same. Thus given
any target symbol error rate (SER) $\epsilon_1>0 $, we can choose
$k<\epsilon_1/2$ in Theorem \ref{thm:main}. Then there exists an $n_0$
depending on $k$ such that \[P_e^{k} \leq 2^{-d n \log n} \leq
\frac{\epsilon_1}{2} \,\,\forall n>n_0.\] Then, assuming independent (both
temporally and spatially) channel realizations, we get that the expected number
of errors $E(N_e)$ seen by \emph{all} transmitters in $t$ single shot
transmissions satisfies 

\begin{IEEEeqnarray}{rCl}
	E(N_e) &\leq& n kt + n (1-k) P_e^{k} t\\
  \text{or }	\frac{E(N_e)}{t} &\leq& n \epsilon_1 \,\, \text{for $n$ large enough.}
	\label{eq:finbound}
\end{IEEEeqnarray}
Dividing both sides by $n $  we get that the per-transmitter error probability $P_e$ can be made smaller than $\epsilon_1$ for a large enough $n$.
The proof is now complete.

We now comment on some of the differences from the analysis in \cite{chowdhury2013reliable}. One complication is introduced by the fact that the null space is not empty. Thus the properties of the null space will affect the behaviour of the resulting estimate. In particular, as seen in \cite{bayati2011dynamics}, there does exist a ``bad solution", in the sense that it differs from the correct solution in $\Theta (n)$ symbols. However, we are able to establish that in order to reliably ``clean up" the solution from the ISQ decoder, a random sample would be sufficient. Moreover, it is possible to sample efficiently from this solution set. Thus although computing  $\hat{\mb{x}}_{\epsilon,\bma{\delta}}$
(and thereby $P_S (\hat{\mb{x}}_{\epsilon,\bma{\delta}})$) has exponential complexity, sampling from  $\hat{f}_S$  does not. 
We then propose a simple randomized
solution, and show that this belongs to the family of distributions just mentioned.  
Note
that both the sampling and the projection operation can be done efficiently in
polynomial time.  We show that the distribution on the resulting estimate
$\tilde{f_S}(\bma{\tilde{y}})$ is within the family of distributions
$\{\hat{f}_S(\bma{y}): \bma{y} = P_S(\hat{\mb{x}}_{\epsilon,\bma{\delta}}),
\bma{\delta} \sim f(\bma{\delta}) \}$. This concludes the proof.
\qed

\section{Extensions}
In this section, we point out several extensions to the same ideas that we discussed so far, for more general systems. In particular we focus on the more general kinds of fading distribution, more general constellations, finite blocklength constellations, and faster decay  of the probability of error with $n$. We also indicate how the Theorem \ref{thm:main2} holds not only for a constant $\alpha >0$, but also for an asymptotically vanishing sequence of $\alpha_n$, i.e. \[\alpha_n = \frac{1}{(\log n)^{\xi}} \,\, \textrm{ for }0<\xi <1.\] 
\subsection{General fading distribution}
In the derivation of the proof so far, we assumed i.i.d. $\mathcal{N}(0,1)$ fading for the channel coefficients. We now show how they may be generalized to a much wider class of fading distributions, namely any distribution satisfying the Berry-Esseen bounds on the convergence of the cdf of normalized sums to the gaussian distribution function. 

Before we proceed, we state one version of this lemma.

\begin{lemma}
  Berry-Esseen: Given $N$ i.i.d. random variables $U_1,\ldots,U_N$, with $E[|U_i|^3] \leq \infty$ and $E[|U_i|^2] =\sigma^2$, the following holds for all $x$:
\[\left|P\left(\frac{\sum_{i=1}^N U_i}{\sqrt{N} \sigma} \leq x \right) - \Phi(x) \right|  \leq \frac{K\rho}{\sigma^3 \sqrt{N}}\]
\end{lemma}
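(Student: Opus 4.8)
The statement is the classical Berry--Esseen theorem, so the plan is to follow the standard characteristic-function argument built on Esseen's smoothing inequality. Without loss of generality I would first reduce to the mean-zero, unit-variance case by replacing each $U_i$ with $(U_i - E U_i)/\sigma$, so that $S_N = \frac{1}{\sqrt N}\sum_{i=1}^N U_i$ has variance $1$ and the target limit is the standard normal CDF $\Phi$. Write $F_N$ for the distribution function of $S_N$, let $\psi$ denote the characteristic function of a single normalized summand, and recall that the characteristic function of $S_N$ is $\psi(t/\sqrt N)^N$ while that of $\Phi$ is $e^{-t^2/2}$. Let $\rho' = E|U_i - EU_i|^3/\sigma^3$ denote the normalized third absolute moment, which is finite by hypothesis.

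The first key step is to invoke Esseen's smoothing inequality: for any $T>0$,
\[
\sup_x |F_N(x) - \Phi(x)| \;\le\; \frac{1}{\pi}\int_{-T}^{T} \left| \frac{\psi(t/\sqrt N)^N - e^{-t^2/2}}{t}\right| dt \;+\; \frac{24}{\pi T}\sup_x \Phi'(x).
\]
This converts the supremum distance we actually want to bound into a controllable integral of the difference of characteristic functions over a finite window $[-T,T]$, plus a remainder term that decays like $1/T$ (here $\sup_x \Phi'(x) = 1/\sqrt{2\pi}$).

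The second, and most delicate, step is to bound the integrand on $[-T,T]$. Using the moment assumptions I would Taylor-expand $\psi(u) = 1 - u^2/2 + \theta(u)$ with $|\theta(u)| \le \tfrac{1}{6}\rho'|u|^3$, and then control $\psi(t/\sqrt N)^N - e^{-t^2/2}$ by the elementary factorization bound $|a^N - b^N| \le N|a-b|\,\max(|a|,|b|)^{N-1}$ together with a uniform majorant $|\psi(u)| \le e^{-u^2/2 + c|u|^3}$ valid for $|u|$ small. This should yield an estimate of the form $|\psi(t/\sqrt N)^N - e^{-t^2/2}| \le C\,\frac{\rho'}{\sqrt N}\,|t|^3\, e^{-t^2/4}$, valid provided $|t| \le T$ with $T$ of order $\sqrt N/\rho'$, so that the argument $t/\sqrt N$ stays inside the region where the Taylor bound is meaningful.

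Finally I would fix the truncation level, e.g. $T = c\sqrt N/\rho'$ for a suitable constant $c$, and assemble the pieces: the integral term is bounded by $\frac{C'\rho'}{\sqrt N}\int |t|^2 e^{-t^2/4}\,dt = O(\rho'/\sqrt N)$, while the smoothing remainder $\frac{24}{\pi T}\sup_x\Phi'(x)$ is likewise $O(\rho'/\sqrt N)$. Collecting all absolute constants into a single $K$ and undoing the initial normalization (so that $\rho'$ reverts to $\rho/\sigma^3$) gives the claimed bound $K\rho/(\sigma^3\sqrt N)$. The main obstacle is the second step: producing a clean, $N$-uniform exponential majorant for $|\psi(t/\sqrt N)^N|$ and the accompanying cubic error estimate, since this is precisely where the third-moment hypothesis must be used sharply and where the admissible range of $t$ — and hence the choice of $T$ that balances the two terms — is determined.
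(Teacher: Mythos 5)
Your outline is a correct rendering of the classical Esseen smoothing/characteristic-function proof, but note that the paper itself does not prove this lemma at all: it states Berry--Esseen as a known classical tool (used only to extend Lemma \ref{lemma1} from Gaussian fading to general fading via the bound $P(Y \leq u) \leq \Phi(u) + K\rho/(\sigma^3\sqrt{n})$), so there is no in-paper argument to compare against. Relative to that, your proposal is self-contained and standard: the smoothing inequality with remainder $\frac{24}{\pi T}\sup_x \Phi'(x)$, the factorization $|a^N - b^N| \leq N|a-b|\max(|a|,|b|)^{N-1}$, the cubic Taylor error $|\theta(u)| \leq \tfrac{1}{6}\rho'|u|^3$, and the truncation $T = c\sqrt{N}/\rho'$ are exactly the ingredients of the textbook (Feller/Esseen) proof, and the bookkeeping you describe does close, using $\rho' \geq 1$ (by Lyapunov's inequality) to absorb the quartic error terms into the cubic one on the admissible range of $t$. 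Two small points worth making explicit: first, the lemma as stated in the paper is slightly sloppy --- it writes $E[|U_i|^3] \leq \infty$ for $E[|U_i|^3] < \infty$, never defines $\rho$ (it is $E[|U_i|^3]$), and implicitly requires $E[U_i] = 0$, without which the claim is false; your centering step silently repairs this, but you should state that the mean-zero assumption is being added, not merely ``without loss of generality,'' since the paper's normalization $\sum_i U_i/(\sqrt{N}\sigma)$ has no centering term. Second, your step two is a plan rather than a proof --- the uniform majorant $|\psi(u)| \leq e^{-u^2/2 + c|u|^3}$ and the resulting bound $|\psi(t/\sqrt{N})^N - e^{-t^2/2}| \leq C\frac{\rho'}{\sqrt{N}}|t|^3 e^{-t^2/4}$ are asserted, not derived --- but these are exactly the estimates in the standard references, so the gap is one of detail, not of substance.
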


In the above $\Phi (x) = 1-Q(x)$ is the cumulative distribution function of a standard normal random variable and $K>0$ is a constant. 

With the above we note that the term appearing in  (\ref{eq:general}) can be expressed as follows:
\begin{IEEEeqnarray}{rCl}
E \left(e^{-t || \sum_{j=1}^n c_j \mb{h_{j}}||^2} \right) &=& E\left(e^{-t n \left|\left|\frac{\sum_{j=1}^n c_j \mb{h_{j}}}{\sqrt{n}}\right|\right|^2}\right)\\
&\overset{(a)}{=}& E\left(e^{-tn\left(\sum_{i=1}^{\alpha n} \left(\sum_{j=1}^{n}c_j H_{i,j}/\sqrt{n}\right)^2\right)}\right)\\
&\overset{(b)}{=}& E\left(e^{-tn \left(\sum_{j=1}^n c_j H_{1,j}/\sqrt{n}\right)^2}\right) ^{\alpha n}.
\end{IEEEeqnarray}
Here $(a)$ follows by decomposing $||\mathbf{y}||^2 = \sum_{i=1}^{\alpha n} y_i^2$, and $(b)$ follows by noting  that $H_{i,j}$ are i.i.d. .
We now observe that an upper bound on the expectation can be written as 
\begin{IEEEeqnarray}{rCl}
E\left(-tn \left(\sum_{j=1}^n c_j H_{1,j}/\sqrt{n}\right)^2\right) &\overset{(c)}{=}& \int_{y=-\infty}^{\infty} e^{-tny^2} p(y) dy \\
&\overset{(d)}{=}& -\int_{u = -\infty}^{\infty} -2unt e^{-tnu^2} P(Y \leq u) du\\
&\overset{(e)}{\leq}& \int_{u=-\infty}^{\infty} 2unt e^{-tnu^2} \Phi(u) du  + \int_{u=-\infty}^{\infty} 2unt e^{-tnu^2} \frac{K \rho}{\sigma^3 \sqrt{n}} du\\  
&\overset{(f)}{\leq}& (1+2nt)^{-1/2} + C /\sqrt{n} \leq (C + t^{-1/2}) n^{-1/2}.
\end{IEEEeqnarray}
In the above $(c)$ follows by defining $p(y)$ to be the density function of $Y = (\sum_{j=1}^n c_j H_{1,j}/\sqrt{n}),$ $(d)$ follows by application of integration by parts, $(e)$ follows from the bound $P(Y \leq u) \leq \Phi(u) + \frac{K \rho}{\sigma^3 \sqrt{n}},$ and $(f)$ follows from actually evaluating the first term and using a trivial bound on the second term i.e. \[ \int_{u=-\infty}^{\infty} 2unt e^{-tnu^2} \frac{K \rho}{\sigma^3 \sqrt{n}} du \leq 2 \int_{u=0}^{\infty}  2unt e^{-tnu^2} \frac{K \rho}{\sigma^3 \sqrt{n}} du.\]
Thus for any fixed $t>0$, we have that
\begin{IEEEeqnarray}{rCl}
(E(-tn (\sum_{j=1}^n c_j H_{1,j}/\sqrt{n})^2)) ^{\alpha n} \leq ((C+t^{-1/2}) n^{-1/2})^{\alpha n} \leq e^{\frac{-\alpha n \log n}{4}} \textrm{ for a large enough $n$.}
\end{IEEEeqnarray}
This, together with the expression in (\ref{eq:general}), choosing $t=1$ gives us the precise asymptotic bound in (\ref{eq:gaussian}). From there on, the remaining claims are the same.
\subsection{General (i.e. non-BPSK) constellations}
For general constellations, the main ideas in the proof remain quite similar, except that the decoder and the proof analysis needs to be slightly different. We first present the generalized decoder and then indicate how the ideas used to establish the result for the BPSK constellation also extend naturally to more general constellations. Let us refer to such a constellation as $\mathcal{M} = \{ m_1, m_2, \ldots, m_N \}$ where $N$ is the number of constellation points.  

We set up some notation first before describing the decoder.

\begin{definition}
The quantizer $\op{Q}$ to a constellation point projects any point $x$ to the nearest constellation point, 
i.e. \[\op{Q}(x) = \operatorname{argmin}_{m_i \in \mathcal{M}} ||x-m_i||_2.\]

\end{definition}

For simplicity  of presentation, $||.||$ refers to the 2-norm unless specified otherwise. 
For a  vector $\mathbf{x}$ of constellation points, $\op{Q}(\mathbf{x})$ projects each coordinate of $\mathbf{x}$ to the nearest constellation point in $\mathcal{M}$, i.e. \[(\op{Q}(\mathbf{x}))_i = \op{Q}(x_i).\]

Given this notation, the ISQ decoder defined earlier, is equivalent to 
\begin{IEEEeqnarray}{rCl}
\mb{\hat{x}}
	&=&\operatorname{Q}(\operatorname{argmin}_{||\mb{x}||_{\infty} \leq \max_{m_i \in \mathcal{M}} ||m_i||}
	||\mb{y}-\mb{H x}|| ^2).
\label{eq:genISQ}
\end{IEEEeqnarray}

This reduces to (\ref{eq:3}) when $\mathcal{M} =\{-1,+1\}$. The definition of the randomized ISQ decoder follows along very similar lines. We pick a point $\mb{x_r}$ randomly from an uniform distribution over the set 
\[B_n = \{ \mb{x}: ||\mb{x}||_{\infty} \leq \max_{m_i \in \mathcal{M}} ||m_i|| \}.\] Then we project $\mb{x}$ on the (possibly non singleton) solution set $S = \operatorname{argmin}_{\mb{x}:||\mb{x}||_{\infty} \leq \max_{m_i \in \mathcal{M}} ||m_i||}
	||\mb{y}-\mb{H x}|| ^2$ of the ISQ decoder, i.e. we return
\[\hat{\mb{x}} = \op{Q}(P_S(\mb{x_r})), \mb{x_r} \sim \operatorname{Unif}(B_n).\]
In the above, the projection operation is the same as that introduced in (\ref{def:projection}). We can show that with the above decoders, the same conclusions  that we derived in Theorem \ref{thm:main2} continue to hold. The proof however needs some generalization of some of the ingredients involved in the proof. We point out such generalizations in the following. 

A critical step towards obtaining Theorem \ref{thm:main2} was the use of the $r-(\epsilon, \bma{\delta})$-grid detector. For a general constellation $\mathcal{M}$, define the scalar grid  as  (using $B_n$ as we defined it earlier) \[\mathcal{G}_{1,\epsilon} = \{ \{g_1, g_2, \ldots, g_{N_{\epsilon}}\}: \textrm{ For any $x \in B_1$}, ||x- g_i||_{\infty} \leq \epsilon \textrm{ for some } i; \\ ||g_i-g_j||_{\infty} > \epsilon\, \forall i \neq j; ||g_i||_2  \leq \max_{m_j \in \mathcal{M}} ||m_j||_2 \, \forall i\}. \]
Note that $\mathcal{G}_{1,\epsilon}$ is not unique. We illustrate possible scalar grids for the BPSK constellation considered earlier and a 2D constellation (e.g. 4-PSK) in Fig. \ref{fig:grid}.\\
\begin{figure}
\begin{subfigure}{0.450\linewidth}
\centering
  \begin{tikzpicture}[scale=0.8]
    \fill[blue] (0,0) circle [radius=2pt, minimum size=0.01cm, blue];
   \fill[blue] (6,0) circle [radius=2pt, minimum size=0.01cm, blue];
   \foreach \i in {1,2,3,4,5}
   {
\fill[red] (\i,0) circle [radius=2pt, minimum size=0.01cm];
   }
\draw[<->]  (-1,0) --  (7,0);
\node at (0,-0.5) {-1};
\node at (6,-0.5) {1};
\draw[-] (3,-4) -- (3,-4.0001);
\draw[-] (3,4) -- (3,4.0001);
  \end{tikzpicture}
\caption{$\mathcal{G}_{1,\epsilon}$  (red dots) for $\epsilon=1/3$ for BPSK (blue dots)}
\end{subfigure}
\begin{subfigure}{0.450\linewidth}
\centering

  \begin{tikzpicture}[scale=0.8]
    \fill[blue] (0,0) circle [radius=2pt, minimum size=0.01cm, blue];
   \fill[blue] (6,0) circle [radius=2pt, minimum size=0.01cm, blue];
   \fill[blue] (3,3) circle [radius=2pt, minimum size=0.01cm, blue];
   \fill[blue] (3,-3) circle [radius=2pt, minimum size=0.01cm, blue];
   \foreach \i in {1,2,3,4,5}
   {
\fill[red] (\i,0) circle [radius=2pt, minimum size=0.01cm];
   }
 \foreach \i in {1,2,3,4,5}
   {
\fill[red] (\i,1) circle [radius=2pt, minimum size=0.01cm];
\fill[red] (\i,-1) circle [radius=2pt, minimum size=0.01cm];
\fill[red] (\i,2) circle [radius=2pt, minimum size=0.01cm];
\fill[red] (\i,-2) circle [radius=2pt, minimum size=0.01cm];
   }

\draw[<->]  (-1,0) --  (7,0);
\draw[<->]  (3,4) -- (3,-4);
\node at (0,-0.5) {-1};
\node at (6,-0.5) {1};
  \draw[blue!50!white!50] (3,0) circle(3cm);

  \end{tikzpicture}
\caption{$\mathcal{G}_{1,\epsilon}$(red dots) for $\epsilon = 1/3$ for  $4$-PSK (blue dots)}

\end{subfigure}
\caption{Possible scalar grids $\mathcal{G}_{1,\epsilon}$ for different constellations}
\label{fig:grid} 
\end{figure}
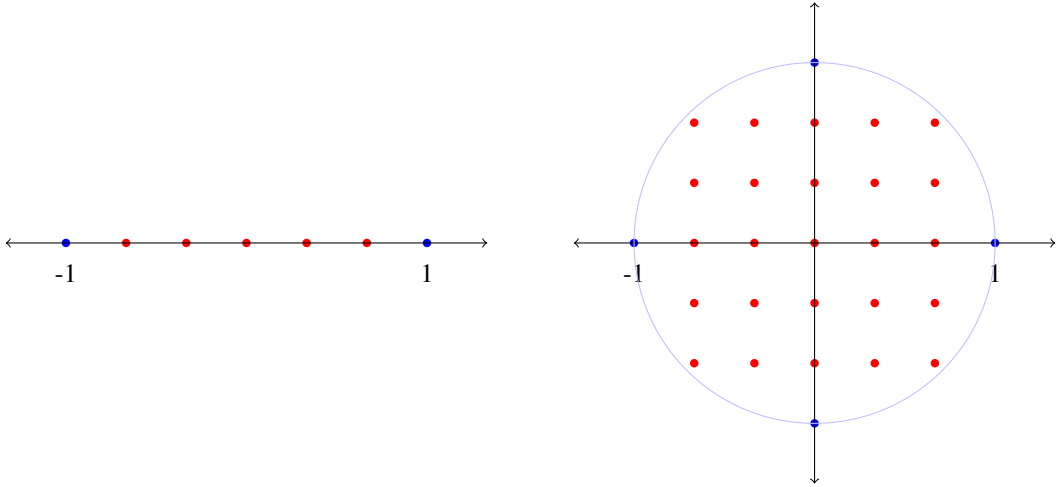 We define the perturbed grid $\mathcal{G}_{1,\epsilon,\delta} = \{{g_1+\delta ,g_2 + \delta,\dots,g_{N_{\epsilon}}+\delta}: g_i \in \mathcal{G}_{1,\epsilon}\,\, \forall i \}$. The perturbed grid in $n$ dimensions is then simply defined as 
\[\mathcal{G}_{n,\epsilon, \bma{\delta}} = \{ \mb{x}: x_i \in \mathcal{G}_{1,\epsilon,\delta_i} \,\, \forall i \}.\]

The $(\epsilon, \bma{\delta})$-ISQ decoder would then be 
\begin{IEEEeqnarray}{l+} \text{$(\epsilon,\bma{\delta})$-ISQ:} \,\mb{\hat{x}}_{\epsilon,\bma{\delta}}
	=\operatorname{Q}(\operatorname{argmin}_{\mb{x} \in \mathcal{G}_{n,\epsilon,\bma{\delta}} }
	||\mb{y}-\mb{H x}|| ^2). \nonumber
	\label{eq:3ageneral}
\end{IEEEeqnarray}

A bound on the error event for the general constellation $\mathcal{C}$ can be had in terms of the minimum distance of the constellation $\mathcal{C}$  defined as $d_{min} = \min_{c_i,c_j \in \mathcal{C}, i \neq j} ||c_i - c_j||_2$. A first step towards that is the observation that Lemma \ref{lemma1}  holds by observing that for $k^{'}n$ errors, \[\sum_j ||c_j||^2 \geq k^{'}n d_{min}^2.\] Similarly Lemma \ref{lemma2} will follow from the observation that $g (\mb{x}) = ||\mb{y} - \mb{H x}||^2$ is convex (independent of the constellation used).

Combining these two observations, we have the result that the estimate from the r-$(\epsilon, \bma{\delta})$ ISQ decoder will have greater than or equal to $k^{'}n$ errors with probability less than $\exp(-a n \log n)$ for some $a>0$. This, together with the fact that for a finite dimensional constellation, $||c_i||_{\infty} \leq \epsilon \Rightarrow ||c_i||_2 \leq \tilde{K} \epsilon $ for some $\tilde{K}>0$ (and vice versa), gives us the result that \[||Q(P_S(\mb{x_r}))-\mb{x_0}||_0 \leq kn,\] with probability at least $1- \exp(-d n \log n)$ with $d>0$,  for any $k>0$, and a large enough $n$. 

This establishes Theorem \ref{thm:main} from which Theorem \ref{thm:main2} follows by the symmetry in the system model.   




\subsection{Finite blocklength constellation design}
A finite blocklength constellation over $T$ time slots (together with a block fading model i.e. a model with the channel matrix $\mb{H}$ remaining constant over $T$ time slots) can be thought of as a general constellation within a single shot transmission model. Thus by the generalization of Theorem \ref{thm:main2} shown in the previous section for arbitrary constellations, we get that the same results hold for arbitrary finite blocklength constellations too. 
\subsection{Provably faster  decay of the error probability}
In the proofs so far, we demonstrated that the number of symbol errors in the \emph{n-user} codeword is less than $kn$ for any $k>0$, i.e. with high probability, the number of errors is eventually sublinear. By using the same techniques to derive the above result, we can also establish the result that the number of errors is less than $n^{\epsilon}$ for any $0<\epsilon <1.$ This would not change any of the conclusions of the previously stated theorems.

\subsection{Asymptotically vanishing sequence of $\alpha_n$}
We restate a version of Theorem \ref{thm:main2} for this case.

\begin{theorem}
The probability of error with the r-ISQ decoder seen by any
transmitting user vanishes in the limit of an asymptotically large number of
transmitters, with the per-transmitter number of receive antennas $\alpha_n$ scaling with the number $n$ of transmitting users like  \[\alpha_n = \frac{1}{(\log n)^{\xi}},\,\, 0<\xi <1.\]
\end{theorem}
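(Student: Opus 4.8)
The plan is to reuse the proof of Theorem \ref{thm:main2} essentially verbatim, replacing the constant $\alpha$ by the sequence $\alpha_n = (\log n)^{-\xi}$ and checking that every estimate which previously produced a factor $\exp(-a\, n \log n)$ now produces $\exp(-a_n\, n\log n)$ with $a_n\, n\log n$ still diverging fast enough to overpower the combinatorial terms in the union bound. The only place where $\alpha$ enters quantitatively is Lemma \ref{lemma1}; the geometric Lemma \ref{lemma2} and the surjectivity argument relating the r-ISQ decoder to the r-$(\epsilon,\bma{\delta})$-ISQ decoder are independent of $\alpha$ and carry over unchanged.

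First I would retrace the Chernoff/Markov computation in Lemma \ref{lemma1} keeping $\alpha_n$ symbolic. The chi-squared moment generating function still gives $E\exp(-t\|\sum_j c_j \mb{h_j}\|^2) = (1+2t\sum_j c_j^2)^{-\alpha_n n/2}$, and the bound $\sum_j c_j^2 \geq k' n$ together with $t=1$ yields $(1+2k'n)^{-\alpha_n n/2}$. Taking logarithms, the exponent is $a_1 n\log n - \tfrac{\alpha_n n}{2}\log(1+2k'n)$, which behaves like $(a_1 - \tfrac{\alpha_n}{2})\, n\log n$. Choosing $a_1 = \alpha_n/4$ makes this $-\tfrac{\alpha_n}{4}\, n\log n = -\tfrac{1}{4}\, n (\log n)^{1-\xi}$, so I obtain $P_{i,\epsilon,\bma{\delta}} \leq \exp(-a_n n\log n)$ with $a_n = \alpha_n/4$.

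Next I would feed this into the union bound (d1)--(d2). The combinatorial and grid factors contribute at most $n\big(\max_i H_2(i/n) + \log\tfrac{1}{\epsilon}\big) = O(n)$ to the log of the bound, with a bounded coefficient $c_0$, while the favorable term contributes $-a_n n\log n$. Hence $P_{e,\epsilon,\bma{\delta}}^{k'} \leq 2^{\,n(c_0 - a_n\log n)}$ up to a polynomial prefactor. The decisive observation is that $a_n \log n = \tfrac14 (\log n)^{1-\xi}$ diverges as $n\to\infty$ precisely because $\xi<1$; therefore for all large $n$ we have $a_n\log n > 2c_0$, and the bound collapses to $2^{-\tfrac12 n (\log n)^{1-\xi}}$, which vanishes. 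The remaining steps---invoking Lemma \ref{lemma2} so that projection onto $S$ moves each coordinate by at most $\epsilon<0.25$, the union bound on the number of near-zero coordinates, and the identification of the uniform-sampling r-ISQ decoder with a member of the $f(\bma{\delta})$-randomized family---are all independent of whether $\alpha$ is constant or vanishing, so they transfer directly. This gives $P_e^{k} \leq 2^{-d\, n (\log n)^{1-\xi}}$, and the averaging argument of Theorem \ref{thm:main2} (choose $k<\epsilon_1/2$) then drives the per-user error below any target $\epsilon_1$.

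The hard part is not any single inequality but the margin in the final union bound: when $\alpha_n\to 0$ the per-subset exponent $a_n n\log n$ degrades, and one must confirm it still outruns the $O(n)$ entropy contribution. This is exactly the product $a_n\log n = \tfrac14(\log n)^{1-\xi}$, which tends to infinity if and only if $\xi<1$; at the endpoint $\xi=1$ it is constant and the entropy term can dominate, so the condition $\xi<1$ in the statement is sharp for this proof technique. I would also note, as the paper does, that reliability is retained but the decay rate degrades from $2^{-dn\log n}$ to the slower (yet still super-polynomial) $2^{-dn(\log n)^{1-\xi}}$.
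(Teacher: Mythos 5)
Your proposal is correct and takes essentially the same route as the paper: the paper proves this theorem precisely by re-deriving Lemma~\ref{lemma1} with $\alpha_n = (\log n)^{-\xi}$ so that the exponent becomes $n(\log n)^{1-\xi}$ (its Lemma~\ref{lemma1mod}), then noting that Theorem~\ref{thm:main} holds in the form $P_e^{k} \leq 2^{-d\,n(\log n)^{1-\xi}}$ (its Theorem~\ref{thm:mainseq}) and that the averaging argument of Theorem~\ref{thm:main2} carries over unchanged --- you have simply made explicit the Chernoff bookkeeping and the divergence condition $(\log n)^{1-\xi}\to\infty$ that the paper leaves implicit. One minor imprecision: the union bound on the number of near-zero coordinates is not literally independent of $\alpha$ (it rests on the same Lemma~\ref{lemma1} machinery), but it degrades to $\exp(-d_1 n(\log n)^{1-\xi})$ in exactly the way your divergence argument already accommodates, so nothing breaks.
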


The critical step towards proving this theorem is the observation that an analogue of Lemma \ref{lemma1} continues to hold in this case with  the following modifications
\begin{lemma} For any $i> k'n$, there exists an $n_0$ and an $a>0$, such that for
	all $n>n_0$, \[P(|| \sum_{j=1}^n c_j \mb{h_{j}}||^2 <  n (\log n)^{1-\xi}) \leq
		\exp(- n (\log n)^{1-\xi}).\] 
                \label{lemma1mod}
\end{lemma}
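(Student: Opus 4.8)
\emph{Proof proposal.} The plan is to reproduce the Chernoff--Markov argument of Lemma \ref{lemma1} almost verbatim, with the single change that the constant ratio $\alpha$ is replaced throughout by the vanishing sequence $\alpha_n = (\log n)^{-\xi}$. Writing $s = \sum_{j=1}^n c_j^2$, each of the $m = \alpha_n n = n/(\log n)^{\xi}$ coordinates of $\sum_{j=1}^n c_j \mb{h_j}$ is an independent $\mathcal{N}(0,s)$ variable, so $\|\sum_{j=1}^n c_j \mb{h_j}\|^2$ is $s$ times a chi-squared random variable with $m$ degrees of freedom. First I would apply Markov's inequality to $\exp(-t\|\sum_{j=1}^n c_j \mb{h_j}\|^2)$ with $t=1$, exactly as in steps $(a1)$--$(a2)$ of Lemma \ref{lemma1}, and substitute the chi-squared moment generating function to obtain
\[P\left(\Bigl\|\textstyle\sum_{j=1}^n c_j \mb{h_j}\Bigr\|^2 < \theta\right) \leq \exp(\theta)\,(1+2s)^{-m/2}.\]
Using that $i>k'n$ forces $s \geq k'n$, exactly as in step $(b)$, I would replace $s$ by $k'n$ in the second factor.

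The only quantity that behaves differently from Lemma \ref{lemma1} is the moment generating function exponent. In the constant-$\alpha$ case it equaled $-\frac{\alpha n}{2}\log(1+2k'n) = \Theta(n\log n)$; here it becomes
\[-\frac{m}{2}\log(1+2k'n) = -\frac{n}{2(\log n)^{\xi}}\,\log(1+2k'n),\]
and since $\log(1+2k'n) = (1+o(1))\log n$, this is $-(1+o(1))\tfrac12\,n(\log n)^{1-\xi}$. Thus the natural rate is $n(\log n)^{1-\xi}$ rather than $n\log n$. Choosing the threshold $\theta = a\,n(\log n)^{1-\xi}$ with $a$ a small enough positive constant (so that the Markov factor $\exp(\theta)$ adds a strictly smaller multiple of $n(\log n)^{1-\xi}$ than the moment generating function term removes) leaves a net exponent of the form $-a\,n(\log n)^{1-\xi}$, which is the assertion of the lemma; this is exactly where the declared constant $a>0$ enters.

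The step I expect to require the most care, and the genuinely new structural point relative to Lemma \ref{lemma1}, is confirming that this rate is still \emph{superlinear} in $n$, i.e. that $n(\log n)^{1-\xi}/n = (\log n)^{1-\xi} \to \infty$. This is precisely where the hypothesis $\xi<1$ is used, and it is what guarantees that the bound remains useful downstream: in the union bound leading to $P^{k'}_{e,\epsilon,\bma{\delta}}$ the entropy and grid factors $\binom{n}{i}(1/\epsilon)^n$ contribute only $O(n)$ to the exponent, so any positive multiple of $n(\log n)^{1-\xi}$ dominates them and the sum over $i$ still collapses to $\exp(-\Theta(n(\log n)^{1-\xi}))$. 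I would also note that the same conclusion follows from the Berry--Esseen computation of the general-fading subsection, whose per-coordinate bound $(C+t^{-1/2})n^{-1/2}$ raised to the $\alpha_n n$ power already yields the $\exp(-\Theta(n(\log n)^{1-\xi}))$ rate, so the lemma is not special to Gaussian fading; and that the harmless rounding $m=\lceil n/(\log n)^{\xi}\rceil$ does not affect any of the asymptotics.
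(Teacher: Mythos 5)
Your proposal is correct and follows essentially the same route as the paper, which proves Lemma \ref{lemma1mod} simply by asserting that the Markov--Chernoff argument of Lemma \ref{lemma1} (chi-squared moment generating function, the bound $\sum_j c_j^2 \geq k'n$, then $t=1$) carries over with $\alpha$ replaced by $\alpha_n = (\log n)^{-\xi}$. In fact your write-up is more explicit than the paper's one-line proof: you correctly identify that the moment generating function exponent becomes $-(1+o(1))\tfrac12 n(\log n)^{1-\xi}$, and that the threshold coefficient $a$ must therefore be taken strictly below $\tfrac12$ --- which also quietly repairs the paper's typo of stating the displayed inequality with coefficient $1$ while declaring an unused constant $a>0$.
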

The proof follows along lines very similar to the ones used to derive the original lemma \ref{lemma1}. Based on this observation Theorem \ref{thm:main} holds with the following modification
\begin{theorem}
	Under r-ISQ decoding, for $m = \alpha_n n$ defined earlier,  and any constant $k>0$, there exists a $d>0$ such that
	for all sufficiently large  $n$,
	\[P_{e}^{k} \leq 2^{-d n (\log n)^{1-\xi}}.\]
	\label{thm:mainseq}
\vspace{-0.5cm}
\end{theorem}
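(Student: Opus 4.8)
The plan is to mirror the Chernoff/Markov argument of Lemma \ref{lemma1}, changing only the degrees-of-freedom bookkeeping to accommodate the vanishing ratio $\alpha_n = (\log n)^{-\xi}$. Fixing a feasible coefficient vector corresponding to at least $k'n$ errors, so that $\sum_j c_j^2 \geq k'n$ (the bound used in step $(b)$ of Lemma \ref{lemma1}), I would recall that $\sum_{j=1}^n c_j \mb{h_j}$ is a zero-mean Gaussian vector in $\mathbb{R}^m$ with $m = \alpha_n n$ coordinates, each of variance $\sum_j c_j^2$. Hence $|| \sum_{j=1}^n c_j \mb{h_j}||^2$ is $(\sum_j c_j^2)$ times a chi-squared variable with $m$ degrees of freedom, and its exponential moment is $(1 + 2t\sum_j c_j^2)^{-\alpha_n n/2}$, exactly the expression in step $(a2)$ of Lemma \ref{lemma1} with $\alpha$ replaced by $\alpha_n$.

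Applying Markov's inequality as in step $(a1)$, for the threshold $\theta_n = n(\log n)^{1-\xi}$ and any $t > 0$,
\[
P\big(|| \sum_{j=1}^n c_j \mb{h_j}||^2 < \theta_n\big) \leq \exp(t\theta_n)\,(1 + 2tk'n)^{-\alpha_n n/2},
\]
so the negative exponent is $\tfrac{\alpha_n n}{2}\log(1 + 2tk'n) - t\theta_n$. The key point is the scaling of the first term: since $\log(1 + 2tk'n) = (1 + o(1))\log n$ for fixed $t$, the gain is
\[
\tfrac{\alpha_n n}{2}\log(1 + 2tk'n) = (1 + o(1))\,\tfrac{n}{2(\log n)^{\xi}}\,\log n = (1 + o(1))\,\tfrac{1}{2}\,n(\log n)^{1-\xi}.
\]
This is precisely where the substitution $\alpha \mapsto \alpha_n$ does its work: the extra $\log n$ coming out of the moment generating function, divided by $(\log n)^{\xi}$, converts the ``$\alpha_n n$'' degrees of freedom into the target rate $n(\log n)^{1-\xi}$.

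The one step requiring more care than in Lemma \ref{lemma1}, and the main obstacle, is the choice of $t$. In Lemma \ref{lemma1} the threshold carries a small adjustable constant ($a_1 = \alpha/4$), so the naive choice $t = 1$ keeps the penalty $t\theta_n = \tfrac{\alpha}{4}n\log n$ strictly below the $\tfrac{\alpha}{2}n\log n$ gain. Here the stated threshold $n(\log n)^{1-\xi}$ has effective constant $1$, and $t = 1$ would make the penalty $\theta_n = n(\log n)^{1-\xi}$ exceed the gain $\tfrac{1}{2} n(\log n)^{1-\xi}$, giving a vacuous bound. I would instead take $t$ small: the optimizer $t^\ast = \tfrac{\alpha_n n}{2\theta_n} - \tfrac{1}{2k'n} \sim \tfrac{1}{2\log n}$ makes the penalty $t^\ast\theta_n \sim \tfrac{m}{2} = o(n(\log n)^{1-\xi})$ negligible and yields the clean exponent $-(\tfrac{1}{2} - o(1))\,n(\log n)^{1-\xi}$.

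Either this route, or equivalently absorbing a small constant into the threshold so that $t = 1$ still beats the penalty, gives $P(\cdot) \leq \exp(-c\,n(\log n)^{1-\xi})$ for a constant $c > 0$. The precise constant is immaterial downstream: exactly as in the passage following Lemma \ref{lemma1}, this bound is fed into a union bound over $\binom{n}{i}(1/\epsilon)^n$ grid terms whose $2^{O(n)}$ growth is dominated by any $\exp(-\Theta(n(\log n)^{1-\xi}))$ factor, so the modified Theorem \ref{thm:mainseq} follows with $1-\xi$ replacing the exponent of $\log n$ throughout the remaining estimates.
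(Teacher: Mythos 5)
Your proposal is correct, and it follows the same route as the paper: the paper proves Theorem \ref{thm:mainseq} by asserting an analogue of Lemma \ref{lemma1} with threshold $n(\log n)^{1-\xi}$ (its Lemma \ref{lemma1mod}, said to follow ``along lines very similar'' to Lemma \ref{lemma1}) and then rerunning the union bound, which is exactly your plan. However, your care with the Chernoff parameter $t$ is not optional bookkeeping --- it repairs a step the paper glosses over. As you note, the literal transcription of Lemma \ref{lemma1}'s proof (take $t=1$) gives exponent $n(\log n)^{1-\xi} - \frac{\alpha_n n}{2}\log(1+2k'n) = +\left(\tfrac{1}{2}-o(1)\right)n(\log n)^{1-\xi} > 0$, a vacuous bound. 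In fact, since the worst feasible case $\sum_j c_j^2 = k'n$ has true lower-tail exponent $\left(\tfrac{1}{2}+o(1)\right)n(\log n)^{1-\xi}$ at this threshold, Lemma \ref{lemma1mod} as literally displayed --- with unit constants on both the threshold and the bound --- is not merely unprovable by this method but false; it must be read with a small constant (any $a<1/2$) scaling both sides, which is precisely your alternative fix of absorbing a constant into the threshold. Either that, or your optimized $t^\ast \sim 1/(2\log n)$ (whose penalty $t^\ast \theta_n \sim m/2$ is negligible), yields $P(\cdot) \leq \exp(-c\,n(\log n)^{1-\xi})$ for some $c>0$, and since Theorem \ref{thm:mainseq} only claims existence of some $d>0$ while the union-bound overhead $\binom{n}{i}(1/\epsilon)^n = 2^{O(n)}$ is swamped by any such factor (because $\xi<1$ makes $(\log n)^{1-\xi}\to\infty$), the theorem follows exactly as you conclude.
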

 Theorem \ref{thm:main2} would also hold unmodified in this case.
\section{Conclusions and Future Work} 

We have considered an uplink communication system in a rich scattering environment
with a large number of non-cooperating transmitters and a large number of antennas at the receiver. The transmitters  send bits to the receiver \emph{without coding}. The receiver does joint decoding of the
noisy received signal from all users  using a  relaxation of the maximum likelihood (ML)
decoder. We call this  technique the interval search and quantize (ISQ) decoder. Since the solution may not be unique in general, we have proposed an efficient randomization scheme (i.e. the r-ISQ decoder) which will still allow us to have reliable estimates from the solution set.  Under general assumptions about the fading distribution of the channel coefficients, we have shown
that with the r-ISQ decoder, for a large enough system size, the error probability that each user sees
is vanishingly small even with the per-transmitter number of receiver antennas being arbitrarily small.  


In
spite of these promising asymptotic properties of the efficient box-constrained
decoders, we pay a price in the finite $n$ behaviour
with respect to the rate of decay of the error probability. The decay rates achievable are at best
polynomial.  Thus, the question of how large the system size needs to be for
the diversity-induced reliability to kick in remains a pertinent question and needs further investigation. Also, the issue of
how practical constraints such as limited diversity in large systems or imperfect
channel knowledge would affect these results remains a topic to be investigated.

\section*{Acknowledgements}

This work was supported by a 3Com Corporation Stanford Graduate Fellowship,
by the NSF Center for Science of Information (CSoI): NSF-CCF-0939370, and by a
gift from Cablelabs. The authors acknowledge helpful discussions and insights from
Tsachy Weissman, in particular with respect to the proof of Theorem  \ref{thm:main}. The first
author would also like to acknowledge helpful discussions with Yash Deshpande,
Stefano Rini, Alexandros Manolakos and Nima Soltani. 

\bibliography{reference_jabref}
\bibliographystyle{IEEEtran}

\end{document}